\title{Automata for Hyperlanguages} 
\titlerunning{Automata for Hyperlanguages} %TODO optional, please use if title 
\author{Borzoo Bonakdarpour\footnote{Optional footnote, e.g. to mark 
corresponding author}}{Department of Computer Science, Iowa State University, 
%[optional: Address], 
U.S.A.}{borzoo@iastate.edu}{https://orcid.org/0000-0003-1800-5419}{
%[funding]
}
\author{Sarai Sheinvald}{Department of Software Engineering, ORT Braude College, 
% [optional: Address], 
Israel %\and My second affiliation, Country \and 
%\url{http://www.myhomepage.edu}
}
{sarai@braude.ac.il}{}{%(Optional) 
%author-specific funding acknowledgements
}
\authorrunning{B. Bonakdarpour and S. Sheinvald} 
\begin{document}
\maketitle
\nolinenumbers

\newcommand{\f}{\varphi}
\newcommand{\g}{\psi}

\newcommand{\globally}{\textsf {G}\, }
\newcommand{\eventually}{\textsf{F}\, }
\newcommand{\weakuntil}{\textsf{W}\, }
\newcommand{\until}{\, \textsf{U}\, }
\newcommand{\releases}{\, \textsf{V}\, }
\newcommand{\nextt}{\textsf{X}\, }
\newcommand{\true}{\textbf{\textit{tt}}}
\newcommand{\false}{\textbf{\textit{ff}}}

\newcommand{\zug}[1]{\langle #1 \rangle}
\newcommand{\tuple}[1]{\langle #1 \rangle}
\newcommand{\A}{\mathcal{A}}
\newcommand{\B}{\mathcal{B}}
\newcommand{\lang}[1]{\mathcal{L}(#1)}
\newcommand{\hlang}[1]{\mathfrak{L}(#1)}
\renewcommand{\hl}{\mathfrak{L}}
\newcommand{\K}{\mathcal{K}}
\newcommand{\M}{\mathcal{M}}
\newcommand{\D}{\mathcal{D}}
\newcommand{\U}{\mathcal{~U~}}
\newcommand{\lstar}{\textsc{L}^*}
\newcommand{\ceil}[1]{\lceil #1 \rceil}
\newcommand{\nfhef}{\textrm{NFH}_{\exists\forall}}
\newcommand{\nfhfe}{\textrm{NFH}_{\forall\exists}}
\newcommand{\nfhe}{\textrm{NFH}_{\exists}}
\newcommand{\nfhf}{\textrm{NFH}_{\forall}}

\newcommand{\comp}[1]{\textsf{\small #1}}

\newcommand{\borzoo}[1]{\textcolor{red}{\bf #1}}
\newcommand{\alphabet}{\Sigma}

\newcommand{\naturals}{\mathbb{N}}
\newcommand{\N}{\naturals}

\newcommand{\bi}[1]{\textbf{\textit #1}} 

%\newdefinition{example}{Example}
%\newtheorem{theorem}{Theorem}
%\newtheorem{lemma}{Lemma}
%\newdefinition{remark}{Remark}
%\newdefinition{definition}{Definition}

\newcommand{\quant}{\mathbb{Q}}
\newcommand{\stam}[1]{}
\newcommand{\zip}{\mathsf{zip}}
\newcommand{\unzip}{\mathsf{unzip}}
\newcommand{\row}{\mathsf{row}}

\begin{abstract}

{\em Hyperproperties} lift conventional trace properties from a set of 
execution traces to a set of sets of execution traces. Hyperproperties have 
been shown to be a powerful formalism for expressing and reasoning about 
information-flow security policies and important properties of cyber-physical 
systems such as sensitivity and robustness, as well as consistency conditions in 
distributed computing such as linearizability. Although there is an extensive 
body of work on automata-based representation of trace properties, we currently 
lack such characterization for hyperproperties.

We introduce {\em hyperautomata} for {\em hyperlanguages}, which are languages 
over sets of words. Essentially, hyperautomata allow running multiple quantified 
words over an automaton. We propose a specific type of hyperautomata called 
{\em nondeterministic finite hyperautomata} (NFH), which accept {\em regular 
hyperlanguages}. We demonstrate the ability of regular hyperlanguages to express 
hyperproperties for finite traces. We then explore the fundamental properties of 
NFH and show their closure under the Boolean operations.
We show that while nonemptiness is undecidable in general, it is decidable for several fragments of NFH. We further show the decidability of the membership problem for finite sets and regular languages for NFH, as well as the containment problem for several fragments of NFH. 
Finally, we introduce learning algorithms 
based on Angluin’s $\lstar$ algorithm for the fragments NFH in which the 
quantification is either strictly universal or strictly existential.

\end{abstract}
\newpage
\section{Introduction}

Hyperproperties~\cite{cs10} generalize the traditional trace 
properties~\cite{as85} to {\em system properties}, i.e., a set of sets of 
traces. Put it another way, a hyperproperty prescribes how the system should 
behave in its entirety and not just based on its individual executions. 
Hyperproperties have been shown to be a powerful tool for expressing and
reasoning about information-flow security policies~\cite{cs10} and important 
properties of cyber-physical systems~\cite{wzbp19} such as sensitivity and 
robustness, as well as consistency conditions in distributed computing such as
linearizability~\cite{bss18}. 

Automata theory has been in the forefront of developing techniques for 
specification and verification of computing systems. For instance, in the
automata-theoretic approach to verification~\cite{vw86,VW94}, the model-checking 
problem is reduced to checking the nonemptiness of the product automaton of the 
model and the complement of the specification. In the industry and other 
disciplines (e.g., control theory), automata are an appealing choice for 
modeling the behavior of a system. Unfortunately, we currently lack a deep 
understanding about the relation between hyperproperties and automata theory. 
To our knowledge, work in this area is limited to ~\cite{fht19}, in which 
the authors develop an automata representation for the class of regular 
$k$-safety hyperproperties. These are hyperproperties where execution traces 
are only universally quantified and their behaviors are non-refutable. They 
introduce the notion of a {\em $k$ bad-prefix automaton} -- a finite-word automaton  
that recognizes sets of $k$ bad  prefixes as finite words. Based on this 
representation, they present a learning algorithm for $k$-safety 
hyperproperties. In~\cite{frs15}, the authors offer a model-checking 
algorithm for hyperCTL$^*$~\cite{cfkmrs14}, which constructs an alternating  
B{\"u}chi automaton that has both the formula and the Kripke structure 
``built-in''. These approaches translate a hyperproperty-related problem to word 
automata.

We generalize the idea in \cite{fht19} to a broader view of an automata-based 
representation of hyperproperties, and introduce {\em hyperautomata} for {\em 
hyperlanguages}, which are languages whose elements are sets of {\em finite} 
words, which we call {\em hyperwords}. In this paper, we propose {\em 
nondeterministic finite-word hyperautomata} (NFH). An NFH runs on {\em 
hyperwords} that contain finite words, by using quantified {\em word variables} 
that range over the words in a hyperword, and a nondeterministic finite-word 
automaton (NFA) that runs on the set of words that are assigned to the 
variables. We demonstrate the idea with two examples.

\begin{example}

Consider the NFH $\A_1$ in Figure~\ref{fig:nfh_examples} (left), whose alphabet 
is $\Sigma = \{a,b\}$, over two word variables $x_1$ and $x_2$. The NFH $\A_1$ contains an underlying standard NFA,
whose alphabet comprises pairs over $\Sigma$, i.e., elements of $\Sigma^2$, in which the first letter represents the letters of the word 
assigned to $x_1$, and dually for the second letter and $x_2$.
The underlying NFA of $\A_1$ requires that (1) these two words agree on their 
$a$ (and, consequently, on their $b$) positions, and (2) once one of the words has ended 
(denoted by $\#$), the other must only contain $b$ letters. Since the 
{\em quantification condition} of $\A_1$ is $\forall x_1 \forall x_2$, in a hyperword $S$ that is accepted by $\A_1$, every two words agree on their $a$ positions. As a result, all the words in $S$ must agree on their $a$ 
positions. The hyperlanguage of $\A_1$ is then all hyperwords in which all 
words agree on their $a$ positions. 

\end{example}

\begin{example}
Next, consider the NFH $\A_2$ in Figure~\ref{fig:nfh_examples} (right), over 
the alphabet $\alphabet = \{a\}$, and two word variables $x_1$ and $x_2$. The 
underlying NFA of $\A_2$ accepts the two words assigned to $x_1$ and $x_2$ 
iff the word assigned to $x_2$ is longer than the word assigned to $x_1$. Since 
the quantification condition of $\A_2$ is $\forall x_1 \exists x_2$, we have 
that $\A_2$ requires that for every word in a hyperword $S$ accepted by $\A_2$, 
there exists a longer word in $S$. This holds iff $S$ contains infinitely many 
words. Therefore, the hyperlanguage of $\A_2$ is the set of all infinite 
hyperwords over $\{a\}$. 

\end{example}

\begin{figure}[ht]
\centering
\scalebox{.8}{
        \includegraphics[scale=0.5]{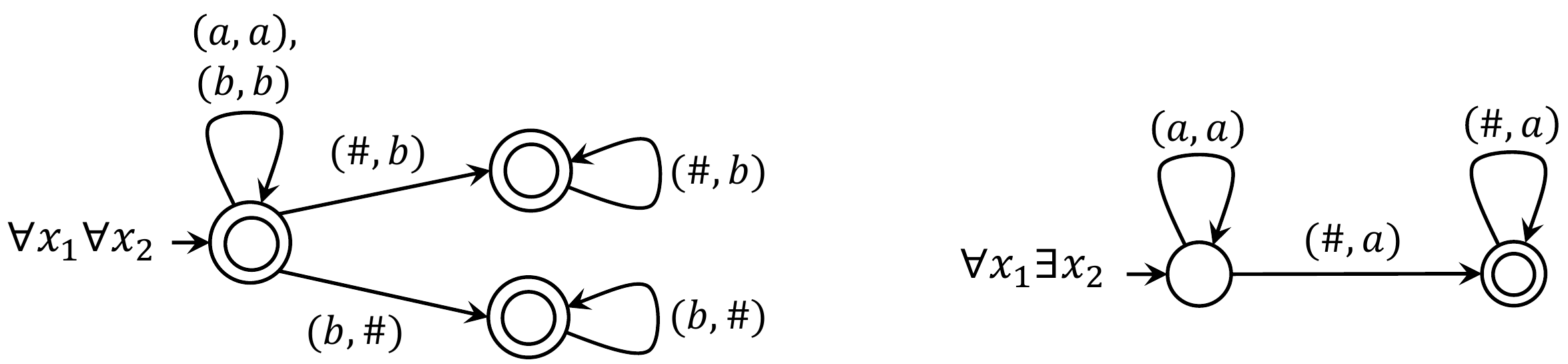}
    }
    \caption{The NFH $\A_1$ (left) and $\A_2$ (right).}
    \label{fig:nfh_examples}
%    \vspace{-4mm}
\end{figure}

We call the hyperlanguages accepted by NFH {\em regular hyperlanguages}.
A regular hyperlanguage $\hl$ can also be expressed by the regular expression 
for the language of the underlying NFA of an NFH $\A$ for $\hl$, augmented with 
the quantification condition of $\A$. We call such an expression a {\em 
hyperregular expression} (HRE). We demonstrate the ability of HREs to express 
important information-flow security policies such as different variations of 
{\em noninteference}~\cite{gm82} and {\em observational 
determinism}~\cite{zm03}.
 
We proceed to conduct a comprehensive study of properties of NFH  (see 
Table~\ref{tab:results}). In particular, we show that NFH are {\em closed} 
under union, intersection, and complementation. We also prove that the 
{\em nonemptiness} 
problem is in general undecidable for NFH. However, for the alternation-free 
fragments (which only allow one type of quantifier), as well as for the 
$\exists\forall$ fragment (in which the quantification condition is limited to a 
sequence of $\exists$ quantifiers followed by a sequence of $\forall$ 
quantifiers), nonemptiness is decidable. %While we focus on the case of finite 
%words, hyperautomata can be naturally lifted to handle infinite words, with 
%similar complexity results.
These results are in line with the results on satisfiability of 
HyperLTL~\cite{fh16}. We also study the {\em membership} and {\em inclusion} 
problems. These results are aligned with the complexity of HyperLTL model 
checking for tree-shaped and general Kripke structures~\cite{bf18}. This shows 
that, surprisingly, the complexity results in~\cite{fh16,bf18} mainly stem from 
the nature of quantification over finite words and depend on neither the full 
power of the temporal operators nor the infinite nature of HyperLTL semantics.

Finally, we introduce learning algorithms for the alternation-free fragments 
of NFH. Our algorithms are based on Angluin's $\lstar$ algorithm 
\cite{Angluin87} for regular languages, and are inspired by~\cite{fht19}, where the authors describe a learning algorithm that is tailored to learn a $k$-bad prefix NFA for a $k$-safety formula. In fact, the algorithm there can be viewed of as a special case of learning a hyperlanguage in the $\exists$-fragment of NFH. 

In a learning algorithm, a {\em learner} aims to construct an automaton for an unknown target language $\cal L$, by means of querying a {\em teacher}, who knows $\cal L$. The learner asks two types of queries: {\em membership queries} (``is the word $w$ in $\cal L$?'') and {\em equivalence queries} (``is $A$ an automaton for $\cal L$?''). In case of a failed equivalence query, the teacher returns a counterexample word on which $A$ and $\cal L$ differ.  
The learning algorithm describes how the learner uses the answers it gets from the teacher to construct its candidate automaton. 

In the case of NFH, the membership queries, as well as the counterexamples, are hyperwords. The number of variables is unknown in advance, and is also part of the learning goal. 
We first define canonical forms for the alternation-free fragments of NFH, which is essential for this type of learning algorithm.
Then, we proceed to describe the learning algorithms for both fragments.

\begin{table}[t]
\begin{center}
\begin{tabular}{|c||c|c|}

%\hline
\multicolumn{1}{c}{\bf Property} & \multicolumn{2}{c}{\bf Result}\\
\hline\hline
Closure &  \multicolumn{2}{c|}{Complementation, Union, Intersection 
(Theorem~\ref{thm:nfh.operations})} \\
\hline

\multirow{3}{*}{Nonemptiness}& $\forall\exists\exists$ & Undecidable 
(Theorem~\ref{thm:nfh.nonemptiness})\\ 
& $\exists^*/\forall^*$ & \comp{NL-complete} 
(Theorem~\ref{thm:nfhe.nfhf.nonemptiness})\\ 
& $\exists^*\forall^*$ & \comp{PSPACE-complete} 
(Theorem~\ref{thm:nfhef.nonemptiness})\\
\hline
\multirow{2}{*}{Finite membership}& NFH & \comp{PSPACE} 
(Theorem~\ref{thm:nfh.membership.finite})\\ 
& $O(\log(k))$ ~$\forall$ & \comp{NP-complete} 
(Theorem~\ref{thm:nfh.membership.finite})\\ 
\hline
Regular membership & \multicolumn{2}{c|}{Decidable 
(Theorem~\ref{thrm:membershipFULL})} \\
\hline
Containment & $\exists^*/\forall^*/\exists^*\forall^*\subseteq \exists^*/\forall^*$ & \comp{PSPACE-complete} 
(Theorem~\ref{thrm:containment})\\
\hline

\end{tabular} 
\caption{Summary of results on properties of NHF.}
\label{tab:results}
\end{center}
\end{table}

\noindent {\em Organization.} \ The rest of the paper is organized as follows. 
Preliminary concepts are presented in Section~\ref{sec:prelim}. We introduce 
the notion of NFH and HRE in Sections~\ref{sec:ha} and~\ref{sec:hre}, while 
their properties are studied in Section~\ref{sec:nfh_properties}. We propose 
our learning algorithm in Section~\ref{sec:learning}. Finally, we make 
concluding remarks and discuss future work in Section~\ref{sec:concl}. 
Detailed proofs appear in the appendix.
\section{Preliminaries}
\label{sec:prelim}

An {\em alphabet} is a nonempty finite set $\Sigma$ of {\em letters}. 
A {\em word} over $\Sigma$ is a finite 
%or infinite 
sequence of letters from 
$\Sigma$. The {\em empty word} is denoted by $\epsilon$, and the set of all finite words is denoted by $\Sigma^*$. A {\em language} is a subset of $\Sigma^*$.

\begin{definition}
\label{def:nfa}
A {\em nondeterministic finite-word automaton} (NFA) is a tuple \linebreak $A 
= \tuple{\Sigma,Q,Q_0,\delta,F}$, where $\Sigma$ is an alphabet, $Q$ is a 
nonempty finite set of {\em states}, $Q_0\subseteq Q$ is a set of {\em initial 
states}, $F\subseteq Q$ is a set of {\em accepting states}, and 
$\delta\subseteq Q\times\Sigma\times Q$ is a {\em transition relation}. 
\end{definition}

Given a word $w=\sigma_1\sigma_2\cdots \sigma_n$ over $\Sigma$, a 
{\em run of $A$ on $w$} is a sequence of states $(q_0,q_1,\ldots q_n)$, such 
that $q_0\in Q_0$, and for every $0 < i \leq n$, it holds that 
$(q_{i-1},\sigma_i, q_i)\in \delta$.
The run is {\em accepting} if $q_n\in F$. 
We say that $A$ {\em accepts} $w$ if there exists an accepting run of $A$ on $w$. 
The {\em language} of $A$, denoted by $\lang{A}$, is the set of all finite words that $A$ accepts. 
A language $\cal L$ is called {\em regular} if there exists an NFA such that $\lang{A} = \cal L$.

An NFA $A$ is called {\em deterministic} (DFA), if for every $q\in Q$ and 
$\sigma\in\Sigma$, there exists exactly one $q'$ for which $(q,\sigma,q')\in 
\delta$, i.e., $\delta$ is a transition {\em function}.
It is well-known that every NFA has an equivalent DFA.

\stam{
A {\em regular expression} over $\Sigma$ is defined inductively as follows. $\epsilon, \emptyset$, and $\sigma\in \Sigma$ are regular expressions. Let $r_1,r_2$ be regular expressions. Then $(r_1|r_2)$, $(r_1\cdot r_2)$, and $(r_1^*)$ are regular expressions. 
The semantics of regular expressions assigns every regular expression $r$ a language $\lang{r}$, as follows. $\lang{\epsilon} = \{\epsilon\}$, $\lang{\emptyset} = \emptyset$, and $\lang{\sigma} = \{\sigma\}$. Additionally, $\lang{(r_1+r_2)}= \lang{r_1}\cup \lang{r_2}$, $\lang{(r_1\cdot r_2)} = \lang{r_1}\cdot \lang{r_2}$, and $\lang{(r_1^*)} = \lang{r_1}^*$. It is well-known that $\cal L$ is regular iff there exists a regular expression $r$ such that $\lang{r} = \cal L$. 
}

\stam{
\begin{definition}
 \label{def:nba}
A {\em nondeterministic B{\"u}chi automaton} (NBA) is a tuple \linebreak $A 
= \tuple{\Sigma,Q,Q_0,\delta,F}$, whose elements are defined as in 
Definition~\ref{def:nfa}. However, the runs of $A$ are defined over infinite 
words. An infinite word $w$ is accepted by $A$ if $A$ has an infinite run on $w$ 
that visits a state in $F$ infinitely often.
\end{definition}

As before, the language $\lang{A}$ is the set of all infinite words accepted by 
$A$. }

\section{Hyperautomata}
\label{sec:ha}

%\subsection{The General Idea}

Before defining hyperautomata, we explain the idea behind them. We first define hyperwords and hyperlanguages.

\begin{definition}
\label{def:hword}
A {\em hyperword over $\Sigma$} is a set of words over $\Sigma$ and a 
{\em hyperlanguage} is a set of hyperwords.
\end{definition}

A 
hyperautomaton $\A$ uses a set of {\em word variables} $X  =\{x_1,x_2,\ldots, x_k\}$. 
When running on a hyperword $S$, these variables are assigned words from $S$. We 
represent an assignment $v:X\rightarrow S$ as the $k$-tuple 
$(v(x_1),v(x_2),\ldots, v(x_k))$. Notice that the variables themselves do not 
appear in this representation of $v$, and are manifested in the order of the 
words in the $k$-tuple: the $i$'th word is the one assigned to $x_i$. This 
allows a cleaner representation with less notations. 

The hyperautomaton $\A$ consists of a {\em quantification condition} $\alpha$ 
over $X$, and an {\em underlying word automaton} $\hat\A$, which runs on words 
that represent assignments to $X$ (we explain how we represent assignments as 
words later on). The condition $\alpha$ defines the assignments that $\hat\A$ 
should accept. For example, $\alpha = \exists x_1\forall x_2$ requires that 
there exists a word $w_1\in S$ (assigned to $x_1$), such that for every word 
$w_2\in S$ (assigned to $x_2$), the word that represents $(w_1,w_2)$ is accepted 
by $\hat\A$. The hyperword $S$ is accepted by $\A$ iff $S$ meets these 
conditions. 

We now elaborate on how we represent an assignment $v:X\rightarrow S$ as a word. 
We encode the tuple $(v(x_1),v(x_2),\ldots v(x_k))$ by a word ${\bi w}$ whose 
letters are $k$-tuples in $\alphabet^k$, where the $i$'th 
letter of ${\bi w}$ represents the $k$ $i$'th letters of the words 
$v(x_1),\ldots ,v(x_k)$ (in case that the words are not of equal length, we ``pad'' 
the end of the word with $\#$ signs). 
For example, the assignment $v(x_1)=aa,v(x_2)=abb$, represented by the tuple $(aa,abb)$, is encoded by the word 
$ (a,a)(a,b)(\#,b)$.
We later refer to ${\bi w}$ as the {\em zipping} of $v$. Once again, notice that due to the indexing of the word variables, the variables do not explicitly appear in ${\bi w}$.   

%We consider two types of underlying word automata: over finite words and over infinite words. In the former case, the underlying automaton $\hat\A$ is an NFA. In the latter case, $\hat\A$ can be any type of automaton over infinite words. Here, we consider B{\"u}chi word automata. 

We now turn to formally define  hyperautomata.
%of both types. 

\subsection{Nondeterminsitic Finite-Word Hyperautomata}
\label{sec:haf}

%Here, we consider hyperwords consisting of words over an alphabet $\Sigma$.
We begin with some terms and notations. 

Let $s = (w_1,w_2,\ldots, w_k)$ be a tuple of finite words over  $\Sigma$. We 
denote the length of the longest word in $s$ by $\ceil{s}$. We represent $s$ by 
a word over $(\Sigma\cup\{\#\})^k$ of length $\ceil{s}$, which is 
formed by a function $\zip(s)$ that ``zips'' the words in $s$ together: the 
$i$'th letter in $\zip(s)$ represents the $i$'th letters in $w_1, w_2, \ldots, 
w_k$, 
and $\#$ is used to pad the words that have ended. For example,
$$\zip(aab, bc, abdd) = (a, b, a)(a, c, b)(b, \#, d)(\#, \#, d).$$
Formally, we have $\zip(s) =  \bi{s}_1\bi{s}_2\cdots \bi{s}_{\ceil{s}}$, 
where
$\bi{s}_i[j] = w_{j_i}$ if $j\leq|w|$, and $\bi{s}_i[j] = \#$, otherwise.
%For example, for $s  = (aa, bab)$, we have $zip(s) = %(a,b)(a,a)(\#,b)$. I 
%added my example and then noticed yours!

Given a zipped word $\bi{s}$, we denote the word formed by the letters in the 
$i$'th positions in $\bi{s}$ by $\bi{s}[i]$. 
That is, $\bi{s}[i]$ is the word $\sigma_1\sigma_2\cdots \sigma_m$ formed by 
defining $\sigma_j = \bi{s}_j[i]$, for $\bi{s}_j[i]\in \Sigma$.
Notice that $\zip(s)$ is reversible, and we can define an $\unzip$ function 
as $\unzip(\bi{s}) = (\bi{s}[1],\bi{s}[2],\dots, \bi{s}[k])$. We sometimes abuse the notation, and use $\unzip(\bi{s})$ to denote $\{\bi{s}[1],\bi{s}[2],\dots, \bi{s}[k]\}$, and $\zip(S)$ to denote the zipping of the words in a finite hyperword $S$ in some arbitrary order.

\begin{definition}
\label{def:nfh}
A {\em nondeterministic finite-word hyperautomaton} (NFH) is a tuple \linebreak
$\A = \tuple{\Sigma,X,Q,Q_0,F,\delta,\alpha}$, where $\Sigma$, $Q$, $Q_0$, and $F$ are as in Definition~\ref{def:nfa}, $X=\{x_1,\dots, x_k\}$ is a finite set of {\em word variables}, 
$\delta\subseteq Q\times (\Sigma\cup \{\#\})^k \times Q$ is a transition 
relation, and $\alpha  = \quant_1 x_1\quant_2x_2\ldots \quant_nx_k$ is a {\em 
quantification condition}, where $\quant_i\in\{\forall,\exists\}$ for every 
$1\leq i\leq k$.

\end{definition}
In Definition~\ref{def:nfh}, the tuple $\tuple{(\Sigma\cup \{\#\})^k, Q, 
Q_0, \delta, F}$ forms an underlying NFA of $\A$, which we denote by 
$\hat{\A}$. We denote the alphabet of $\hat\A$ by $\hat{\Sigma}$. 

Let $S$ be a hyperword and let $v: X\rightarrow S$ be an assignment of the word 
variables of $\A$ to words in $S$. We denote by  $v[x\rightarrow w]$ the assignment obtained from $v$ by assigning the word $w\in S$ to $x\in X$. We represent $v$ by the word $\zip(v) = \zip(v(x_1),\ldots v(x_k))$.  
We now define the acceptance condition of a hyperword $S$ by an NFH $\A$. We 
first define the satisfaction relation $\models$ for $S$, $\A$, a quantification condition $\alpha$, and an assignment $v:X\rightarrow S$, as follows.

\begin{itemize}
    \item For $\alpha = \epsilon$, we denote $S \models _v (\alpha,\A)$ if
$\hat\A$ accepts $\zip(v)$. 

\item For $\alpha = \exists x_i \alpha'$, we denote $S\models_v (\alpha,\A)$ if 
there exists $w\in S$, such that $S \models_{v[x_i\rightarrow w]}  
(\alpha',\A)$.

\item For $\alpha = \forall x_i \alpha'$, we denote $S\models_v (\alpha,\A)$ 
if for every $w\in S$, it holds that $S \models_{v[x_i\rightarrow w]}  
(\alpha',\A)$.\footnote{In case that $\alpha$ begins with 
$\forall$, satisfaction holds vacuously with an empty hyperword. We 
restrict the discussion to nonempty hyperwords.}

\end{itemize}
Since the quantification condition of $\A$ includes all of $X$, the satisfaction is independent of the 
assignment $v$, and we denote $S \models \A$, in which case, we say that {\em $\A$ 
accepts $S$}.

\begin{definition}
Let $\A$ be an NFH. The {\em hyperlanguage} of $\A$, denoted $\hlang{\A}$, is 
the set of all hyperwords that $\A$ accepts.
\end{definition}

We call a hyperlanguage $\hl$ a {\em regular hyperlanguage} if there exists an NFH $\A$ such that $\hlang{\A} = \hl$.

%\noindent{\bf Examples}

\begin{example}
Consider the NFH $\A_3$ in Figure~\ref{fig:ordered}, over the alphabet $\Sigma = 
\{a,b\}$ and two word variables $x_1$ and $x_2$. From the initial state, two 
words lead to the left component in $\hat{\A_3}$ iff in every position, if the 
word assigned to $x_2$ has an $a$, the word assigned to $x_1$ has an $a$. In 
the right component, the situation is dual -- in every position, if the word 
assigned to $x_1$ has an $a$, the word assigned to $x_2$ has an $a$. 
Since the quantification condition of $\A_3$ is $\forall x_1\forall x_2$, in a hyperword $S$ accepted by $\A_3$, in every two words in $S$, the set of $a$ positions of one is a subset of the $a$ positions of the other. Therefore, $\hlang{\A_3}$ includes all hyperwords in which there is a full ordering on the $a$ positions.

\begin{figure}[ht]
%\hrulefill
    \begin{center}
        \includegraphics[scale=0.5]{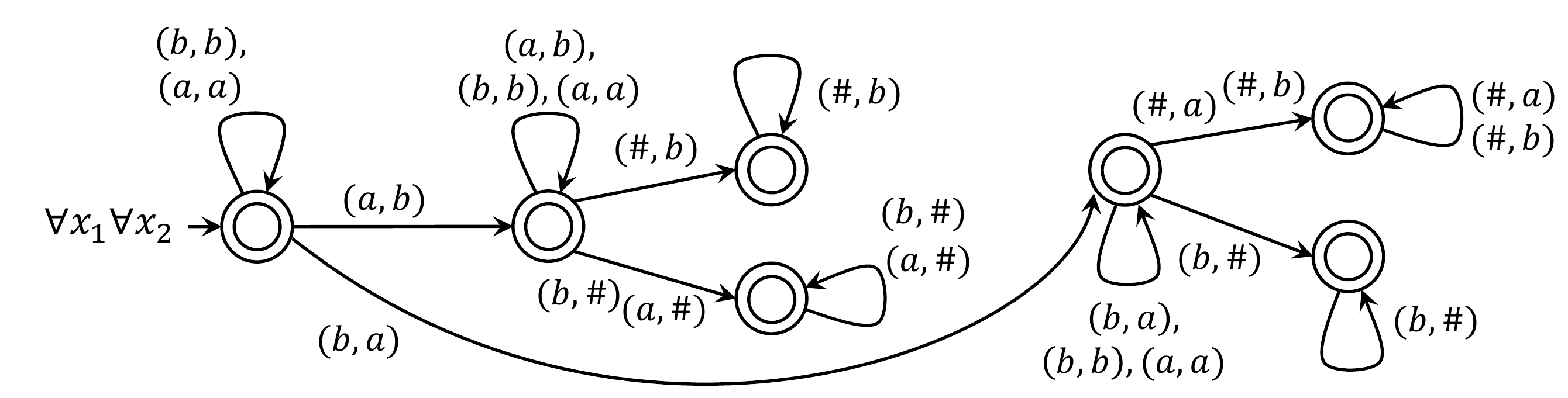}
    \end{center}
    \caption{The NFH $\A_3$.}
    \label{fig:ordered}
%    \hrulefill
\end{figure}
\end{example}

We consider several fragments of NFH, which limit the structure of the quantification condition $\alpha$.
$\nfhf$ is the fragment in which $\alpha$ contains only $\forall$ quantifiers, 
and similarly, in $\nfhe$, $\alpha$ contains only $\exists$ quantifiers. In 
the fragment $\nfhef$, $\alpha$ is of the form $\exists x_1 \cdots \exists x_i \forall 
x_{i+1}\cdots \forall x_k$.

%and finally, in $\nfhfe$, $\alpha$ is of the form  
%$\forall x_1 \cdots \forall x_i \exists x_{i+1}\cdots \exists x_k$.

\subsection{Additional Terms and Notations}

We present several more terms and notations which we use throughout the following sections.
We say that a word $\bi w$ over $(\Sigma\cup \#)^k$ is {\em legal} if 
${\bi w}=\zip(u_1,\ldots u_k)$ for some $u_1,u_2,\ldots u_k \in \Sigma^*$. 
Note that $\bi w$ is legal iff there is no ${\bi w}[i]$ in which there is an occurrence of $\#$ 
followed by some letter $\sigma\in \Sigma$. 

Consider two letter tuples $\sigma_1 = (t_1,\ldots t_k)$ and $\sigma_2 = (s_1,\ldots s_{k'})$. We denote by $\sigma_1+\sigma_2$ the tuple $(t_1,\ldots t_k, s_1,\ldots s_{k'})$.
We extend the notion to zipped words. Let ${\bi w_1} = \zip(u_1,\ldots u_k)$ and ${\bi w_2} = \zip(v_1,\ldots v_{k'})$. We denote by ${\bi w_1} +{\bi w_2}$ the word $\zip(u_1,\ldots u_k,v_1,\ldots v_{k'})$.

Consider a tuple $t = (t_1,t_2,\ldots t_k)$ of items. 
A {\em sequence} of $t$ is a tuple $(t'_1, t'_2,\ldots t'_k)$, where 
$t'_i\in\{t_1,\ldots t_k\}$ for every $1\leq i \leq k$. A {\em permutation} of $t$ is a reordering of the elements of $t$. 
We extend these notions to zipped words, to 
assignments, and to hyperwords, as follows. Let $\zeta = 
(i_1,i_2,\ldots i_k)$ be a sequence (permutation) of $(1,2,\ldots, k)$.
\begin{itemize}
    \item Let ${\bi w} = \zip(w_1,\ldots w_k)$ be a word over $k$-tuples. The word ${\bi w}_\zeta$, 
defined as $\zip(w_{i_1}, w_{i_2}, \ldots w_{i_k})$ is a sequence (permutation) of ${\bi w}$.
    \item Let $v$ be an assignment from a set of variables $\{x_1,x_2,\ldots 
x_k\}$ to a hyperword $S$. The assignment $v_\zeta$, defined as $v_\zeta(x_j) = 
v(x_{i_j})$ for every $1\leq i,j \leq k$, is a sequence (permutation) of $v$. 
\item Let $S$ be a hyperword. The tuple ${\bi w} = (w_1,\ldots w_k)$, where $w_i\in S$, is a sequence of $S$. if $\{w_1,\ldots w_k\} = S$, then $\bi w$ is a permutation of $S$. 
\end{itemize}

\section{Hyperregular Expressions and Application in Security}
\label{sec:hre}

Given an NFH $\A$, the language of its underlying NFA $\hat\A$ can be expressed as a regular expression $r$. Augmenting $r$ with the 
quantification condition $\alpha$ of $\A$ constitutes a {\em hyperregular expression} (HRE) $\alpha r$. For example, consider the NFH $\A_1$ in Figure~\ref{fig:nfh_examples}. The HRE of $\A_1$ is:
$$
\forall x_1\forall x_2\Big((a, a) \mid (b, b)\Big)^*\Big((\#, b)^* \mid (b, 
\#)^* \Big)
$$
We now show the application of HREs in specifying well-known information-flow 
security policies.

{\em Noninteference}~\cite{gm82} requires that commands issued by users holding 
high clearances be removable without affecting observations of users holding 
low clearances:
$$
\varphi_{\mathsf{ni}} = \forall x_1\exists x_2(l, l\lambda)^*
$$
where $l$ denotes a low state and $l\lambda$ denotes a low state where all high 
commands are replaced by a dummy value $\lambda$.

{\em Observational determinism}~\cite{zm03} requires that if two executions of 
a system start with low-security-equivalent events, then these 
executions should remain low equivalent:
$$
\varphi_{\mathsf{od}} = \forall x_1\forall x_2 (l, l)^+ \mid (\bar{l}, 
\bar{l})(\$, \$)^* \mid (l, \bar{l})(\$, \$)^* \mid (\bar{l}, l)(\$, \$)^*
$$
where $l$ denotes a low event, $\bar{l} \in \Sigma \setminus \{l\}$, and $\$ 
\in 
\Sigma$. We note that similar policies such as {\em Boudol and Castellani’s 
noninterference}~\cite{bd02} can be formulated in the same 
fashion.\footnote{This policy states that every two executions that start from 
bisimilar states (in terms of memory low-observability), should remain 
bisimilarly low-observable.}

{\em Generalized noninterference} (GNI)~\cite{m88} allows nondeterminism in 
the low-observable behavior, but requires that low-security outputs may 
not be altered by the injection of high-security inputs:
$$
\varphi_{\mathsf{gni}} = \forall x_1\forall x_2\exists x_3 \bigg((h, l, hl) 
\mid (\bar{h}, l, \bar{h}l) \mid (h, \bar{l}, h\bar{l}) \mid (\bar{h}, \bar{l}, 
\bar{h}\bar{l}) \bigg)^*
$$
where $h$ denotes the high-security input, $l$ denotes the low-security output, 
$\bar{l} \in \Sigma \setminus\{l\}$, and $\bar{h} \in \Sigma \setminus \{h\}$.

{\em Declassification}~\cite{ss00} relaxes noninterference by allowing leaking 
information when necessary. Some programs need to reveal secret information to 
fulfill functional requirements. For example, a password checker must reveal 
whether the entered password is correct or not:
$$
\varphi_{\mathsf{dc}} = \forall x_1\forall x_2 (li,li)(pw, pw)(lo, lo)^+
$$
where $li$ denotes low-input state, $pw$ denotes that the password is correct, 
and $lo$ denotes low-output states. We note that for brevity, in the above 
formula, we do not include behaviors where the first two events are not low or 
in the second event, the password is not valid. 

{\em Termination-sensitive noninterference} requires that for two executions 
that
 start from low-observable states, information leaks are not permitted by the termination behavior of the program:
$$
\varphi_{\mathsf{tsni}} = \forall x_1\forall x_2 (l, l)(\$, \$)^*(l, l)  \mid 
(\bar{l}, \bar{l})(\$, \$)^* \mid (l, \bar{l})(\$, \$)^* \mid (\bar{l}, l)(\$, 
\$)^*
$$
where $l$ denotes a low state and $\$ \in \Sigma$.

\section{Properties of Regular Hyperlanguages}
\label{sec:nfh_properties}

In this section, we consider the basic operations and decision 
problems for the various fragments of NFH. We mostly provide proof sketches, and 
the complete details appear in the appendix.
%are in the full version. 
Throughout this section, $\A$ is an NFH 
$\tuple{\Sigma,X,Q,Q_0,\delta,F,\alpha}$, where $X = \{x_1,\ldots x_k\}$. 

%For the various complexity results, notice that  the number of variables $k$ and the number of letters $\Sigma$ dictate an alphabet size of $O(|\Sigma|^k)$ for $\hat\A$. Thus, though the state space of $\hat\A$ is of size $n$, its size may be exponential in the number of variables due to its transition relation, which may be of size $n\times |\Sigma|^k\times n$. 

We first show that NFH are closed under all the Boolean operations.

\begin{theorem}\label{thm:nfh.operations}
NFH are closed under union, intersection, and complementation.
\end{theorem}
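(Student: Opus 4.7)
The plan is to handle complementation first, and then derive intersection and union either by direct constructions or via De Morgan.

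For complementation, given $\A = \tuple{\Sigma, X, Q, Q_0, \delta, F, \alpha}$, I would construct $\bar\A$ over the same variable set, with the dualized quantification $\bar\alpha$ obtained by swapping every $\forall$ with $\exists$ and vice versa, and with an underlying NFA $\hat{\bar\A}$ obtained by determinizing $\hat\A$ and complementing its accepting set (incurring the standard exponential blowup). A straightforward induction on the length of the quantifier prefix then shows $\hlang{\bar\A} = \overline{\hlang{\A}}$: the base case $\alpha = \epsilon$ follows from the NFA complement, and the inductive step rests on the logical duality $\lnot \exists x\, \phi \equiv \forall x\, \lnot \phi$ together with its $\forall$-counterpart.

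For intersection, given $\A_1$ and $\A_2$ with disjoint variable sets $\{x_1, \ldots, x_{k_1}\}$ and $\{y_1, \ldots, y_{k_2}\}$ and quantification conditions $\alpha_1, \alpha_2$, I would define $\A_1 \cap \A_2$ over the union of the variable sets, with the concatenated quantifier block $\alpha_1 \alpha_2$, and with an underlying NFA that is a synchronized product of modified versions $\hat\A_1'$ and $\hat\A_2'$ of the two input automata. The modification is to add $(\#, \ldots, \#)$-labelled self-loops at every accepting state of each $\hat\A_i$; the product, over alphabet $(\Sigma \cup \{\#\})^{k_1 + k_2}$, routes the first $k_1$ coordinates of each letter to $\hat\A_1'$ and the last $k_2$ coordinates to $\hat\A_2'$. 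Because the $y$-variables do not influence the satisfaction condition of $\alpha_1$ and vice versa, an induction on the prefix $\alpha_1 \alpha_2$ yields $\hlang{\A_1 \cap \A_2} = \hlang{\A_1} \cap \hlang{\A_2}$. Union is then obtained via De Morgan: $\hlang{\A_1 \cup \A_2} = \overline{\overline{\hlang{\A_1}} \cap \overline{\hlang{\A_2}}}$.

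The main technical hurdle is the padding mismatch in the intersection construction: the zip of the full $(k_1 + k_2)$-tuple may be strictly longer than the zip of the $x$-subtuple alone, since the longest word in the combined assignment can come from the $y$-side. The $(\#, \ldots, \#)$ self-loops added at each accepting state of $\hat\A_i$ handle exactly this by letting the underlying automaton ``coast'' through the extra all-padding columns without losing acceptance. Apart from this, the remaining step is to verify that $\alpha_1 \alpha_2$ faithfully composes the two quantifier prefixes, which follows from the disjointness of the variable groups and the fact that each $\hat\A_i'$ genuinely ignores the coordinates belonging to the other group.
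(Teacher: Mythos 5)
Your proposal is correct and matches the paper's proof in all essentials: complementation by dualizing the quantifier prefix and complementing the underlying NFA, and intersection by a synchronized product over $(\Sigma\cup\{\#\})^{k_1+k_2}$ with a padding mechanism at accepting states (the paper uses fresh $\#$-sink states where you use $\#$-self-loops, an immaterial difference). The only divergence is that the paper gives a direct union construction (linear in the state spaces) and you obtain union via De Morgan from intersection and complementation, which is equally valid for the closure claim, at the cost of an avoidable exponential blowup from the two complementations.
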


\begin{proof}[Proof Sketch]
Complementing $\A$ amounts to dualizing its quantification condition (replacing every $\exists$ with $\forall$ and vice versa), and complementing $\hat\A$ via the standard construction for NFA.

Now, let $\A_1$ and $\A_2$ be two NFH.
The NFH $\A_{\cap}$ for $\hlang{\A_1}\cap \hlang{\A_2}$ is based on the product construction of $\hat\A_1$ and $\hat\A_2$.
The quantification condition of $\A_{\cap}$ is $\alpha_1 \cdot \alpha_2$. 
The underlying automaton $\hat\A_{\cap}$ advances simultaneously on both $\A_1$ 
and $\A_2$: when $\hat\A_1$ and $\hat\A_2$ run on zipped hyperwords ${\bi w_1}$ 
and ${\bi w_2}$, respectively, 
$\hat\A_{\cap}$ runs on ${\bi w_1}+{\bi w_2}$, and accepts only if both 
$\hat\A_1$ and $\hat\A_2$ accept.

Similarly, the NFH $\A_{\cup}$ for $\hlang{\A_1}\cup \hlang{\A_2}$ is based on the union construction of $\hat\A_1$ and $\hat\A_2$.
The quantification condition of $\A_{\cup}$ is again $\alpha_1\cdot \alpha_2$. 
The underlying automaton $\hat\A_{\cup}$ advances either on $\A_1$ or $\A_2$. For every word $\bi w$ read by $\hat\A_1$, the NFH $\hat\A_{\cup}$ reads ${\bi w}+{\bi w'}$, for every ${\bi w}'\in \hat\Sigma_2^*$, and dually, for every word $\bi w$ read by $\hat\A_2$, the NFH $\hat\A_{\cup}$ reads ${\bi w'}+{\bi w}$, for every ${\bi w}'\in \hat\Sigma_1^*$.
\end{proof}

\stam{
\begin{theorem} 
\label{thm:nfh.complement}

NFH are closed under complementation.

\end{theorem}

\begin{theorem} \label{thm:nfh.union}
NFH are closed under union.
\end{theorem}

\begin{theorem}\label{thm:nfh.intersection}

NFH are closed under intersection.
\end{theorem}
}% of stam

We now turn to study various decision problems for NFH. We begin with the 
nonemptiness problem: given an NFH $\A$, is $\hlang{\A} = \emptyset$? We show 
that while the problem is in general undecidable for NFH, it is decidable for 
the fragments that we consider. 

\begin{theorem}
\label{thm:nfh.nonemptiness}
The nonemptiness problem for NHF is undecidable.
\end{theorem}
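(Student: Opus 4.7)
The plan is to reduce from the halting problem for deterministic Turing machines. Given a TM $M$ and input $w$, I will construct an NFH $\A_{M,w}$ with quantification $\forall x_1 \exists x_2 \exists x_3$ such that $\hlang{\A_{M,w}}$ is nonempty iff $M$ halts on $w$. The encoding represents each configuration visited along the $M$-computation on $w$ as a word of the form $1^i \$ c_i$, where the unary prefix records the step index and $c_i$ is the $i$th configuration. The intended witness, when $M$ halts in $k$ steps, is the finite hyperword $\{1^i \$ c_i : 0 \le i \le k\}$.

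The underlying NFA $\hat\A_{M,w}$, running on $\zip(x_1,x_2,x_3)$, will enforce two independent conditions. First, $x_3$ must be a well-formed tagged halting configuration (its state component is the halting state), which forces every accepted hyperword to contain at least one halting witness. Second, for the pair $(x_1,x_2)$, the NFA will accept if $x_1$ is syntactically malformed, or $x_1$ equals the fixed string $\$ c_0$ (the initial configuration $q_0 w$), or the tag of $x_2$ is one less than the tag of $x_1$ and the configuration part of $x_2$ is a TM-predecessor of that of $x_1$. Each check is local and realizable by an NFA: TM transitions modify only a bounded window around the head, the two unary tags are compared in lockstep along the zipped tracks, and $c_0$ is a fixed finite string.

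For correctness, the forward direction is immediate: if $M$ halts in $k$ steps, the hyperword $\{1^i \$ c_i : 0 \le i \le k\}$ is accepted, using $x_3 = 1^k \$ c_k$ and, for each $x_1 = 1^i \$ c_i$ with $i \ge 1$, the predecessor $x_2 = 1^{i-1} \$ c_{i-1}$. For the converse, if some $S$ is accepted, any halting witness $x_3 = 1^k \$ c_h \in S$ forces, by instantiating the universal at $x_3$ and iterating, a chain of step-$j$ configurations for $j = k, k-1, \dots, 0$, all in $S$. At tag $0$ the NFA leaves no option but $x_1 = \$ c_0$, so the chain bottoms out at the true initial configuration, and each link is a valid TM transition. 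Hence the chain is an actual length-$k$ computation of $M$ on $w$ ending in a halting state, so $M$ halts on $w$.

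The main obstacle I anticipate is preventing spurious halting witnesses not reachable from $c_0$. The unary step tag is the essential device: it forces the predecessor chain to have a definite length, and the NFA disallows any predecessor at tag $0$ other than the hardwired initial configuration. Without this tagging, one could populate $S$ with halting configurations together with arbitrary backward orbits that never pass through $c_0$, breaking the reduction. Since the halting problem is undecidable, the construction yields undecidability of NFH nonemptiness.
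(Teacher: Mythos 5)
Your proof is correct, but it takes a genuinely different route from the paper's. The paper reduces from the Post Correspondence Problem (following the undecidability proof for HyperLTL satisfiability in~\cite{fh16}): it uses the same quantifier prefix $\forall x_1\exists x_2\exists x_3$, where $x_3$ is assigned an encoding $w_{sol}$ of a full PCP solution, $x_1$ a partial solution $w_{sol,k}$ (the solution with its first $k$ dominoes removed), and $x_2$ the next partial solution $w_{sol,k+1}$; the underlying NFA checks letter-matching of the upper and lower tracks and that $x_2$ is obtained from $x_1$ by deleting one domino, which needs a somewhat delicate ``shifting'' gadget that buffers the last few letters of each track in the state. Your reduction from the halting problem is structurally parallel---a $\forall\exists\exists$ prefix forcing an unbounded chain of related witnesses inside the hyperword, anchored by an existentially quantified ``goal'' word and terminated by a strictly decreasing measure---but your chain runs backward from a halting configuration to the initial one, with the unary step tag playing the role of the paper's shrinking domino sequence as the termination device. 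The individual checks you rely on are all NFA-realizable: successive TM configurations differ only in a bounded window and are read off the zipped tracks with a one-letter offset induced by the tags, the tag comparison is a lockstep scan, the malformed-input escape clause keeps junk words in $S$ from blocking acceptance, and the tag-$0$ clause pins the bottom of the chain to the hardwired initial configuration, so the converse direction is sound. What each approach buys: the paper's PCP route keeps the construction aligned with the HyperLTL satisfiability proof it explicitly mimics; your TM route is self-contained and arguably easier to verify, since the unary tags make the descending-chain induction completely explicit, at the cost of redoing the standard ``the successor relation on configurations is regular'' bookkeeping.
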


The proof of Theorem~\ref{thm:nfh.nonemptiness} mimics the ideas in \cite{fh16}, 
which uses a reduction from the {\em 
Post correspondence problem (PCP)} to prove the undecidability of HyperLTL 
satisfiability.

For the alternation-free fragments, we can show that a simple reachability test on their underlying automata suffices to verify nonemptiness. Hence, we have the following.

\begin{theorem} \label{thm:nfhe.nfhf.nonemptiness}
The nonemptiness problem for $\nfhe$ and $\nfhf$ is \comp{NL-complete}.
\end{theorem}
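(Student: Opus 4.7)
The plan is to prove the upper bound by giving $\comp{NL}$ algorithms for each fragment and to derive the lower bound from a reduction from standard NFA nonemptiness. For $\nfhf$ with $\alpha = \forall x_1 \cdots \forall x_k$, I first argue that $\hlang{\A} \neq \emptyset$ iff there exists $w \in \Sigma^*$ such that $\zip(w,w,\ldots,w)\in\lang{\hat\A}$. One direction: any nonempty $S\in\hlang{\A}$ must in particular accept the constant assignment $(w,\ldots,w)$ for every $w\in S$, so the singleton $\{w\}$ itself is already in $\hlang{\A}$. The other direction: if $\hat\A$ accepts $\zip(w,\ldots,w)$, then $\{w\}\in\hlang{\A}$ since the only assignment $X\to\{w\}$ is the constant one. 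Given this characterization, nonemptiness reduces to reachability from $Q_0$ to $F$ in $\hat\A$ when restricted to ``diagonal'' transitions labeled $(\sigma,\ldots,\sigma)$ with $\sigma\in\Sigma$, which can be checked on-the-fly per guessed transition and is the standard $\comp{NL}$ reachability problem.

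For $\nfhe$ with $\alpha = \exists x_1 \cdots \exists x_k$, observe that $\hlang{\A} \neq \emptyset$ iff $\hat\A$ accepts some legal zipped word $\zip(w_1,\ldots,w_k)$, in which case $\{w_1,\ldots,w_k\}$ is a witness hyperword. The $\comp{NL}$ algorithm nondeterministically traces an accepting path of $\hat\A$ one transition at a time, storing only the current state, an index (into the input) pointing to the transition just taken, and a $\lceil\log k\rceil$-bit counter. To enforce legality at each new step, the algorithm iterates the counter $i$ from $1$ to $k$, looks up the $i$-th coordinate of the previous and current transition labels via their stored indices, and rejects if the previous coordinate is $\#$ but the current one is not. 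All stored items have size $O(\log n)$, so the total space is logarithmic.

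For the lower bound I reduce NFA nonemptiness, which is $\comp{NL}$-hard, to nonemptiness in either fragment. Given an NFA $A$ over $\Sigma$, construct an NFH $\A'$ with a single variable $x_1$, quantification $\exists x_1$ (respectively $\forall x_1$), and underlying automaton obtained from $A$ by re-labeling each $\sigma\in\Sigma$ as the singleton tuple $(\sigma)$; then $A$ is nonempty iff $\A'$ is. I expect the main obstacle to be the $\nfhe$ upper bound: a naive legality check would maintain a $k$-bit vector of already-terminated components, which exceeds $O(\log n)$ space when $k$ is linear in the input size. The key trick is to avoid materializing this vector and instead verify componentwise consistency of consecutive letters by storing only pointers into the input transitions and iterating a logarithmic counter over the $k$ coordinates.
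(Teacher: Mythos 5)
Your proposal is correct and follows essentially the same route as the paper: the lower bound via NFA nonemptiness with a single quantified variable, the $\forall$ case reduced to singleton hyperwords and diagonal reachability, and the $\exists$ case reduced to accepting a legal zipped word with an on-the-fly, coordinate-by-coordinate legality check kept in logarithmic space. Your pointer-plus-counter implementation of that check is just a concrete rendering of the paper's transition re-encoding, so there is no substantive difference.
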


The nonemptiness of $\nfhef$ is harder, and reachability does not suffice. 
However, we show that the problem is decidable.

\stam{
\begin{lemma}\label{lemma:nfhef.nonempty}
Let $\A$ be an $\nfhef$ with a quantification condition $\alpha = \exists x_1,\ldots \exists x_m \forall 
x_{m+1}\ldots \forall x_k$, where $1 \leq m < k$. 
Then $\A$ is nonempty iff $\A$ accepts a hyperword of size at most $m$.
\end{lemma}

We now use Lemma~\ref{lemma:nfhef.nonempty} to describe a decision procedure for 
the nonemptiness of $\nfhef$. 

\begin{theorem}\label{thm:nfhef.nonemptiness}
The nonemptiness problem for an $\nfhef$ $\A$ can be decided in space that is 
polynomial in the size of $\hat\A$.
\end{theorem}
}

\begin{theorem}
\label{thm:nfhef.nonemptiness}
The nonemptiness problem for $\nfhef$ is  \comp{PSPACE-complete}.
\end{theorem}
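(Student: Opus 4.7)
The plan is to split the proof into PSPACE membership and PSPACE-hardness, the first driven by a small-model property and a product construction, the second by a reduction from intersection-of-DFAs nonemptiness.

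For membership, I would first prove the small-model lemma (essentially the commented lemma from the excerpt): $\A$ is nonempty iff it accepts some hyperword $S$ with $|S|\leq m$. Given any $S\in\hlang{\A}$, fix witnesses $w_1,\ldots,w_m\in S$ for the existential quantifiers and set $S' = \{w_1,\ldots,w_m\}$. Then $S'\models\A$: the same witnesses lie in $S'$, and any universal tuple $(u_{m+1},\ldots,u_k)\in (S')^{k-m}$ is also in $S^{k-m}$, so $\hat\A$ accepts $\zip(w_1,\ldots,w_m,u_{m+1},\ldots,u_k)$ by the acceptance of $S$. Hence nonemptiness is equivalent to the existence of $w_1,\ldots,w_m\in\Sigma^*$ such that for every $f\colon\{m+1,\ldots,k\}\to[m]$, $\hat\A$ accepts $\zip(w_1,\ldots,w_m,w_{f(m+1)},\ldots,w_{f(k)})$. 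I would encode this as the nonemptiness of a product automaton $\hat\B$ over alphabet $(\Sigma\cup\{\#\})^m$ whose states are functions $g\colon [m]^{k-m}\to Q$ tracking the current state of $\hat\A$ on each universal assignment; on input $(\sigma_1,\ldots,\sigma_m)$ the $f$-component updates via the $\hat\A$-transition on $(\sigma_1,\ldots,\sigma_m,\sigma_{f(m+1)},\ldots,\sigma_{f(k)})$, and $\hat\B$ accepts when every component is accepting in $\hat\A$. Each state of $\hat\B$ fits in $O(m^{k-m}\log|Q|)$ bits, transitions are computable on the fly from $\hat\A$, and a Savitch-style nondeterministic reachability check in $\hat\B$ runs in polynomial space.

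For the lower bound, I would reduce from the nonemptiness of $\bigcap_{i=1}^{n}\lang{A_i}$ for DFAs $A_1,\ldots,A_n$ over $\Sigma$, a classical PSPACE-complete problem. Let $\Sigma' = \Sigma\cup\{\$_1,\ldots,\$_n\}$ and build $\A\in\nfhef$ over $\Sigma'$ with quantification $\exists x_1\cdots\exists x_n\forall x_{n+1}$. The underlying NFA $\hat\A$ (i) requires its first letter to be $(\$_1,\$_2,\ldots,\$_n,\$_j)$ for some $j\in[n]$, nondeterministically routing control into a fresh copy of $A_j$; (ii) reads every subsequent letter only of the uniform form $(\sigma,\sigma,\ldots,\sigma)$ with $\sigma\in\Sigma$, feeding $\sigma$ to that copy of $A_j$; and (iii) accepts when all $n{+}1$ coordinates end together with $A_j$ accepting. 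The construction is polynomial in $n$ and $\sum_i|A_i|$. The first-letter constraint forces any model to have existential witnesses of the form $w_i = \$_i w$ for a common $w\in\Sigma^*$ (so in particular they are distinct and $|S|=n$), and the universal choice sweeps through each $w_j$, imposing $A_j$ accepts $w$ for every $j$. Conversely, a common $w\in\bigcap_i\lang{A_i}$ yields the model $\{\$_i w: i\in[n]\}$.

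The main obstacle I anticipate is showing that the membership algorithm truly fits in PSPACE relative to the input description: the size $O(m^{k-m}\log|Q|)$ of a $\hat\B$-state must be bounded by a polynomial in $|\A|$, which requires appealing to the fact that $\hat\A$'s alphabet $(\Sigma\cup\{\#\})^k$ is already part of the input encoding. The small-model property itself is a clean monotonicity argument, and the hardness reduction is mostly bookkeeping once the first-letter ``indexing'' trick is in place; the delicate point there is verifying that the forced distinctness of witnesses makes the universal coordinate range over all $n$ indices, so that all $n$ of the intersection constraints are genuinely enforced.
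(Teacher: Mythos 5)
Your proposal is correct, and it splits into one half that mirrors the paper and one half that genuinely diverges. The upper bound is essentially the paper's own argument: the same small-model lemma (nonemptiness iff a hyperword of size at most $m$ is accepted), followed by a product of $m^{k-m}$ re-indexed copies of $\hat\A$ --- your components $g\colon[m]^{k-m}\to Q$ are exactly the paper's automata $A_\zeta$ for sequences $\zeta=(1,\dots,m,i_1,\dots,i_{k-m})$ with $i_j\le m$ --- traversed on the fly. The obstacle you flag at the end (why is $m^{k-m}$ polynomial in the input?) is precisely the point the paper spends its accounting on: it observes that a nonempty $\nfhef$ whose universal variables can range over at least two distinct words must have $|\delta|\ge|\Sigma\cup\{\#\}|^{k-m}$, so the exponential in $k-m$ is already charged to the size of $\hat\A$; you would need to make that step explicit rather than gesture at the alphabet being ``part of the input encoding,'' since only the letters occurring in $\delta$ are actually written down. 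The lower bound is where you take a different route: the paper reduces from the unary corridor tiling problem, using $\log n$ universal variables ranging over the all-$0$ and all-$1$ words to binary-encode a column index (plus a separate variant of the reduction for a fixed number of $\forall$'s), whereas you reduce from nonemptiness of $\bigcap_i\lang{A_i}$ for DFAs, using $\exists x_1\cdots\exists x_n\forall x_{n+1}$ with the first-letter tags $\$_1,\dots,\$_n$ forcing the existential witnesses to be $\$_1w,\dots,\$_nw$ for a common $w$ and letting the single universal variable select which $A_j$ to run. Your reduction checks out (the tags force distinctness, the uniform-letter constraint forces a common suffix $w$, and sweeping the universal variable over $S$ enforces all $n$ membership constraints), it is simpler, and it subsumes both of the paper's hardness claims at once, since it establishes \comp{PSPACE}-hardness already with a single $\forall$ quantifier. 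What the paper's tiling reduction buys in exchange is a demonstration that hardness persists under a different resource profile (constantly many existentials and logarithmically many universals), which your construction, with its $n$ existential variables, does not address.
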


\begin{proof}[Proof Sketch]
We can show that an $\nfhef$ $\A$ is nonempty iff it accepts a hyperword $S$ of size that is bounded by the number $m$ of $\exists$ quantifiers in $\alpha$.
We can then construct an NFA $A$ whose language is nonempty iff it accepts 
$\zip(S)$ for such a hyperword $S$. The size of $A$ is 
$O(|\delta|)^{m^{k-m}})$. Unless $\A$ only accepts 
hyperwords of size $1$, which can be easily checked, $|\delta|$ must be 
exponential in the number $k-m$ of $\forall$ quantifiers, to account for all the 
assignments to the variables under $\forall$, and so overall $|A|$ is of size 
$O(|\A|^k)$. The problem can then be decided in \comp{PSPACE} by traversing 
$A$ on-the-fly. We show that a similar result holds for the case that $k-m$ is 
fixed.

We use a reduction from the unary version of the tiling problem to prove \comp{PSPACE} lower bounds both for the general case and for the case of a fixed number of $\forall$ quantifiers. 
\end{proof}

We turn to study the membership problem for NFH: given an NFH $\A$ and a 
hyperword $S$, is $S\in\hlang{\A}$? 
When $S$ is finite, the set of possible assignments from $X$ to $S$ is finite, 
and so the problem is decidable. We call this case the {\em finite membership 
problem}. 

\begin{theorem}\label{thm:nfh.membership.finite}
\begin{itemize}
\item The finite membership problem for NFH is in \comp{PSPACE}. 
\item
The finite membership problem for NFH with $O(\log(k))$ $\forall$ quantifiers is \comp{NP-complete}. 
\end{itemize}
\end{theorem}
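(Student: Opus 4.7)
My plan is to evaluate the quantifier prefix of $\A$ directly against the finite hyperword $S$, using the recursive semantics of $\models$ from Section~\ref{sec:haf}.

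For the \comp{PSPACE} upper bound I would describe an alternating polynomial-time procedure. Walking down $\alpha = \quant_1 x_1 \cdots \quant_k x_k$, at step $i$ the algorithm branches existentially or universally (matching $\quant_i$) over the $|S|$ possible words assignable to $x_i$; once all $k$ variables have been fixed to some assignment $v$, it deterministically simulates $\hat\A$ on $\zip(v)$, whose length is bounded by $\max_{w \in S} |w|$. Each recursive frame stores only an index into $S$, and the final NFA simulation is polynomial, so the overall procedure is in \comp{APTIME} = \comp{PSPACE}. An iterative realization of the same recursion is a depth-$k$ DFS maintaining a single polynomial-size stack of partial assignments.

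For the \comp{NP} upper bound when the number $\ell$ of universal quantifiers is $O(\log k)$, the certificate is the tuple of $k - \ell$ words of $S$ assigned to the existentially quantified variables (polynomial in $k$ and $\log|S|$). The verifier then iterates over all $|S|^\ell$ tuples of words assignable to the universally quantified variables and, for each resulting complete assignment $v$, checks in polynomial time whether $\hat\A$ accepts $\zip(v)$. Under the bound $\ell = O(\log k)$ the count $|S|^\ell$ stays polynomial in the input size, and each NFA simulation is polynomial, placing the problem in \comp{NP}.

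For the lower bounds I would reduce from \textsc{Qbf} (\comp{PSPACE}-hardness) and from \textsc{3-Sat} (\comp{NP}-hardness, already within the purely existential fragment and hence within the $O(\log k)$ regime). Propositional variables become word variables of the NFH, the two truth values are encoded by a fixed two-word hyperword $\{w_0, w_1\}$, and the underlying NFA scans the zipped assignment letter by letter to evaluate the Boolean formula. The quantifier prefix of the NFH mirrors that of the input instance, and $\hat\A$ is polynomial in the size of the formula. The main obstacle I anticipate is the careful accounting in the \comp{NP} upper bound: confirming that the $|S|^\ell$ enumeration truly fits in polynomial time under the stated $O(\log k)$ bound and correctly aligning the representations of $S$ and the quantifier prefix so that the logarithmic restriction yields a polynomial-size search space; a secondary difficulty is making the \textsc{Qbf} reduction precise so that $\hat\A$ remains polynomial even when $\alpha$ is deep.
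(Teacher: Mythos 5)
Your upper bounds are essentially the paper's: the \comp{PSPACE} bound comes from cycling through the assignments $v:X\rightarrow S$ in the order dictated by the quantifier prefix while simulating $\hat\A$ on $\zip(v)$ on the fly, and the \comp{NP} bound from guessing the existentially quantified words and enumerating the $|S|^{\ell}$ tuples for the universally quantified ones. (Note also that the first bullet claims only membership in \comp{PSPACE}, so no \textsc{Qbf} hardness reduction is called for.)

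The genuine gap is in the \comp{NP}-hardness reduction. With $S=\{w_0,w_1\}$ and word variables $x_1,\dots,x_n$ standing for the propositional variables, every letter of $\zip(v)$ is an $n$-tuple whose $i$-th coordinate is a letter of $v(x_i)$; at any position where $w_0$ and $w_1$ differ, this letter ranges over $2^n$ possible tuples. Since $\delta\subseteq Q\times(\Sigma\cup\{\#\})^k\times Q$ lists transitions on explicit tuples, an automaton that ``evaluates a literal'' --- i.e., branches on coordinate $i$ while being indifferent to the other $n-1$ coordinates --- needs $2^{n-1}$ transitions, so your underlying NFA is not polynomial in the size of the formula, and nondeterministically guessing a satisfied literal per clause does not help, because verifying the guess runs into the same projection problem. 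The paper's reduction is built precisely to sidestep this: it reduces from the Hamiltonian cycle problem, takes $S$ to be the $n$ unit-vector words $0^{i-1}10^{n-i}$, and labels each transition of $\hat\A$ with the single explicit tuple $(0)^{i-1}(1)(0)^{n-i}$, so that $|\delta|=|E|$ and an accepting assignment is forced to be a bijection encoding a Hamiltonian cycle. Repairing your argument requires a source problem and an encoding in which every transition the automaton ever needs is one concrete tuple, which is a different combinatorial idea from ``scan the zipped assignment and evaluate the Boolean formula.''
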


\begin{proof}[Proof Sketch]
We can decide the membership of a hyperword $S$ in $\hlang{\A}$ by iterating over all relevant assignments from $X$ to $S$, and for every such assignment $v$, checking on-the-fly whether $\zip(v)$ is accepted by $\hat\A$. 
This algorithm uses space of size that is polynomial in $k$ and logarithmic in $|\A|$ and in $|S|$. 

When the number of $\forall$ quantifiers in $\A$ is 
$|O(\log(k))|$, we can iterate over all assignments to the $\forall$ variables in polynomial time, while guessing assignments to the variables under $\exists$. Thus, membership in this case is in \comp{NP}.

We use a reduction from the Hamiltonian cycle problem to prove \comp{NP-hardness} for this case. Given a graph $G=\tuple{\{v_1,\ldots v_n\}, E}$, we construct a hyperword $S$ with $n$ different words of length $n$ over $\{0,1\}$, each of which contains a single $1$. We also construct an $\nfhe$ $\A$ over $\{0,1\}$ with $n$ variables, a graph construction similar to that of $G$, and a single accepting and initial state $v_1$. From vertex $v_i$ there are transitions to all its neighbors, labeled by the letter $(0)^{i-1}+(1)+(0)^{n-i}$. Thus, $\A$ accepts $S$ iff there exists an assignment $f:X\rightarrow S$ such that  $\zip(f)\in\lang{\hat\A}$. Such an assignment $f$ describes a cycle in $G$, where $f(x_i)=w_j$ matches traversing $v_i$ in the $j$'th step. The words in $S$ ensure a single visit in every state, and their length ensures a cycle of length $n$.

\noindent{\it Note:} for every hyperword of size at least $2$, the number of transitions in $\delta$ must be exponential in the number $k'$ of $\forall$ quantifiers, to account for all the different assignments to these variables. Thus, if $k = O(k')$, an algorithm that uses a space of size $k$ is in fact logarithmic in the size of $\A$. 
\end{proof}

\stam{

\begin{theorem}\label{thm:nfh.membership.finite}
Let $\A$ be an NFH and $S$ be a finite hyperword over $\Sigma$. Then it can be 
decided whether $S\in \hlang{\A}$ in space that is polynomial in $k$, and 
logarithmic in $|\hat\A|, |S|$.
\end{theorem}

\begin{proof}
As discussed in the proof of Theorem~\ref{thm:nfhef.nonemptiness}, the size of 
$\hat\Sigma$ must be exponential in the number of $\forall$ quantifiers in 
$\alpha$, and therefore is exponential in $k$. We can decide the membership of 
$S$ in $\hlang{\A}$ by iterating over all assignments of the type $X\rightarrow 
S$. For every such assignment $v$, we construct $\zip(v)$ and run $\hat\A$ on 
$\zip(v)$, on-the-fly. 
\end{proof}

The exponential size of $\hat\A$ is derived from the number of $\forall$ 
quantifiers. When the number of $\forall$ quantifiers is fixed, then 
$\hat\Sigma$ is not necessarily exponential in $k$. 

\begin{theorem}
\label{thrm:membershipA}
The finite membership problem for NFH with a fixed number of $\forall$ 
quantifiers is \comp{NP-complete}.
\end{theorem}
 
}

When $S$ is infinite, it may still be finitely represented. 
We now address the problem of deciding whether a regular language $\cal L$ 
(given as an NFA) is accepted by an NFH. We call this {\em the regular 
membership problem for NFH}. We show that this problem is decidable for the 
entire class of NFH.

\begin{theorem}
\label{thrm:membershipFULL}
The regular membership problem for NFH is decidable.
\end{theorem}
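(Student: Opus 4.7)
The plan is to decide whether $L(B) \in \hlang{\A}$ by inductively eliminating the word-variables of $\A$ from innermost to outermost, at each step either projecting out a variable via a synchronized product with $B$ (existential case) or doing so after complementation (universal case), ending with a trivial automaton whose acceptance directly gives the answer.

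Write $\alpha = \quant_1 x_1 \cdots \quant_k x_k$ and set $S = L(B)$. I would define a sequence of NFAs $A_0, A_1, \ldots, A_k$, where $A_j$ reads words over $(\Sigma \cup \{\#\})^{k-j}$, starting with $A_0 = \hat\A$. The invariant to maintain is that $A_{j-1}$ accepts a legal zipped tuple $\zip(u_1, \ldots, u_{k-j+1})$ iff the remaining quantifier block $\quant_{k-j+2}x_{k-j+2} \cdots \quant_k x_k$ is satisfied by $\hat\A$ in $S$ under the partial assignment $x_i \mapsto u_i$ for $i \leq k-j+1$. To construct $A_j$, I eliminate $\quant_{k-j+1}$: in the existential case, form an $\epsilon$-NFA that runs $A_{j-1}$ and $B$ in parallel, where $A_j$ reads $\bi w \in ((\Sigma\cup\{\#\})^{k-j})^*$ and nondeterministically guesses, letter by letter, a word $w$ that is fed simultaneously to $B$ and as the $(k-j+1)$-th track of a zipping fed to $A_{j-1}$; $\epsilon$-transitions allow the simulation to continue past the end of $\bi w$ (with $\#$'s on the first $k-j$ tracks) if $w$ is longer, while a padding mode on the last track handles the opposite case. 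Acceptance requires both $A_{j-1}$ and $B$ to reach accepting states simultaneously, after which the last track is projected out. In the universal case, apply the equivalence $\forall = \neg \exists \neg$: determinize and complement $A_{j-1}$ (restricted to legal zippings), perform the existential construction above with $B$, and complement again. Each such step is a finite, effective NFA construction preserving the invariant.

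After $k$ steps, $A_k$ runs over the trivial alphabet $(\Sigma\cup\{\#\})^0$, so its language is either $\{\epsilon\}$ or $\emptyset$; in the former case $L(B) \in \hlang{\A}$, in the latter it is not, and the test reduces to checking whether $A_k$ has an accepting initial state. The hard part is the padding bookkeeping in the existential step: the zipping of $\bi w$ with the guessed word may be longer or shorter than $\bi w$ depending on the lengths involved, so the product automaton must transparently pad the appropriate track with $\#$'s without producing an illegal zipping (one in which a $\#$ is followed by a letter of $\Sigma$ on some track); this is handled by annotating each track with a ``live/padding'' flag and restricting transitions accordingly, together with the $\epsilon$-transitions beyond the end of the explicit input described above. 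Modulo this bookkeeping and the non-elementary blow-up from repeated complementations (which I will not attempt to optimize, since only decidability is claimed), each step is a standard NFA construction, so the procedure terminates after exactly $k$ quantifier-eliminations and decides regular membership.
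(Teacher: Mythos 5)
Your proposal is correct, and it uses the same basic toolkit as the paper's proof: a synchronized product of the underlying automaton with the input NFA that projects out one quantified track at a time, determinization and complementation to handle $\forall$ via $\neg\exists\neg$, and the same padding bookkeeping (the paper precomputes an NFA for $\lang{A}\cdot\#^*$ for exactly this purpose). The one genuine difference is the order of elimination, and it is not merely cosmetic. The paper builds $\A_1,\ldots,\A_k$ by projecting out the \emph{outermost} remaining variable $x_i$ at step $i$, so that $\hat\A_{i+1}$ accepts $\zip(u_1,\ldots,u_{k-i})$ iff some $u\in\lang{A}$ makes $\hat\A_i$ accept $\zip(u,u_1,\ldots,u_{k-i})$, and then evaluates the residual prefix $\quant_{i+1}x_{i+1}\cdots\quant_k x_k$ on $\A_{i+1}$. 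Justifying ``$\lang{A}\in\hlang{\A_i}$ iff $\lang{A}\in\hlang{\A_{i+1}}$'' therefore amounts to commuting the projected $\exists x_i$ past the remaining quantifier block, which is sound in only one direction when that block contains a $\forall$: with $\alpha=\exists x_1\forall x_2$ and $\hat\A$ accepting exactly the diagonal words $\zip(u,u)$, the projection yields an automaton accepting all of $\lang{A}$, so the residual $\forall x_2$ test succeeds even when $|\lang{A}|\ge 2$ and the original condition fails. Your innermost-first order avoids this issue entirely: each $A_j$ computes the truth value of the innermost remaining quantifier as a function of the outer partial assignment, so the invariant is maintained without any quantifier exchange. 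In that sense your argument is the more robust way to carry out the same plan; the cost (a tower of exponentials coming from the repeated complementations) is the same in both versions, and decidability follows either way once the elimination order is chosen as you chose it.
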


\begin{proof}[Proof Sketch]
Let $A$ be an NFA, and let $\A$ be an NFH, both over $\Sigma$.
We describe a recursive procedure for deciding 
whether $\lang{A}\in\hlang{\A}$.

For the base case of $k=1$, if $\alpha = \exists x_1$, then 
$\lang{A}\in\hlang{\A}$ iff $\lang{A}\cap \lang{\hat{\A}} \neq \emptyset$.
Otherwise, if $\alpha = \forall x_1$, then $\lang{A}\in\hlang{\A}$ iff 
$\lang{A}\notin \hlang{\overline{\A}}$, where $\overline{\A}$ is the NFH for 
$\overline{\hlang{\A}}$. 
The quantification condition for $\overline{\A}$
is $\exists x_1$, which conforms to the previous case.

For $k>1$, we construct a sequence of NFH $\A_1, \A_2, \ldots, \A_k$. 
If $\alpha$ starts with $\exists$, then we set $\A_1 = \A$. Otherwise, we set $\A_1 = \overline{\A}$.
Given $\A_i$ with a quantification condition $\alpha_i$, we construct $\A_{i+1}$ as follows. 
If $\alpha_i$ starts with $\exists$, then
the set of variables of $\A_{i+1}$ is $\{x_{i+1},\ldots x_k\}$, and the 
quantification condition $\alpha_{i+1}$ is $\quant_{i+1}x_{i+1}\cdots 
\quant_kx_k$, where $\alpha_i = \quant_ix_i \quant_{i+1}\cdots \quant_kx_k$. 
The NFH $\A_{i+1}$ is roughly constructed as the intersection between $A$ and $\hat\A_{i}$, based on the first position in every $(k-i)$-tuple letter in $\hat\Sigma_i$.
Then, $\hat\A_{i+1}$ accepts a word $\zip(u_1,\ldots u_{k-i})$ iff there 
exists a word $u\in \lang{A}$, such that $\hat\A_{i}$ accepts 
$\zip(u,u_1,\ldots u_{k-i})$.
Notice that this exactly conforms to the $\exists$ condition. 
Therefore, if $\quant_{i} = \exists$, then $\lang{A}\in\hlang{\A_i}$ iff $\lang{A}\in\hlang{\A_{i+1}}$. 

\stam{
The set of states of $\A_{i+1}$ is $ Q_i\times P'$, and the set of initial states is $Q_i^0\times P_0$. The set of accepting states is ${\cal F}_i\times  F'$. For every 
$(q\xrightarrow{\sigma_i,\ldots,\sigma_k}q')\in\delta_i$ and every 
$(p\xrightarrow{\sigma_i}p')\in \rho$, we have 
$((q,p)\xrightarrow{\sigma_{i+1},\ldots \sigma_k}(q',p'))\in\delta_{i+1}$. 

Then, $\hat\A_{i+1}$ accepts a word $\zip(u_1,u_2,\ldots u_{k-i})$ iff there 
exists a word $u\in \lang{A}$, such that $\hat\A_{i}$ accepts 
$\zip(u,u_1,u_2,\ldots u_{k-i})$. 

We first consider the case that $\quant_i = \exists$. 
Let $v:\{x_{i},\ldots x_k\}\rightarrow \lang{A}$.
Then $\lang{A}\models _v \alpha_i\A_i$ iff there exists $w\in \lang{A}$ such 
that $\lang{A}\models_{v[x_i\rightarrow w]} \alpha_{i+1},A_i$.
For an assignment $v':\{x_{i+1},\ldots x_k\}\rightarrow \lang{A}$, it holds 
that 
$\zip(v')$ is accepted by $\hat{A}_{i+1}$ iff there exists a word $w\in 
\lang{A}$ such that $\zip(v)\in\lang{\hat{A}_i}$, where $v$ is obtained from 
$v'$ 
by setting $v(x_i) = w$. 

Therefore, we have that $\lang{A}\models_{v[x_i\rightarrow 
w]}\alpha_{i+1},\A_i$ 
iff $\lang{A}\models_{v'} \alpha_{i+1}, \A_{i+1}$, that is, 
$\lang{A}\in\hlang{\A_i}$ iff $\lang{A}\in\hlang{\A_{i+1}}$. 
}

If $\quant_i = \forall$, then  $\lang{A}\in \lang{\A_i}$ iff 
$\lang{A}\notin \overline{\hlang{\A_i}}$. The 
quantification condition of $\overline{\A_i}$ begins with $\exists x_i$. We then construct $\A_{i+1}$ w.r.t. $\overline{\A_i}$ as described above, and check for non-membership.

Every $\forall$ quantifier requires complementation, which is exponential in 
$|Q|$. Therefore, in the worst case, the complexity of this algorithm is 
$O(2^{2^{...^{|Q||A|}}})$, where the tower is of height $k$. If the number of 
$\forall$ quantifiers is fixed, then the complexity is $O(|Q||A|^k)$. 
\end{proof}

Since nonemptiness of NFH is undecidable, so are its universality and containment problems. However, we show that containment is decidable for the fragments that we consider.

\begin{theorem}
\label{thrm:containment}
 The containment problems of $\nfhe$ and $\nfhf$ in $\nfhe$ and $\nfhf$ and of 
$\nfhef$ in $\nfhe$ and $\nfhf$ are \comp{PSPACE-complete}.
\end{theorem}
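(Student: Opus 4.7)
The plan is to establish containment via the standard reduction $\hlang{\A_1}\subseteq\hlang{\A_2}$ iff $\hlang{\A_1}\cap\overline{\hlang{\A_2}}=\emptyset$. By Theorem~\ref{thm:nfh.operations}, complementing $\A_2$ dualizes its quantification and complements $\hat\A_2$ (an exponential blow-up that we never materialize, handling it instead via an on-the-fly subset construction that adds only $|Q_2|$ bits of tracked state). Intersecting with $\A_1$ yields an NFH whose quantification is $\alpha_1\cdot\overline{\alpha_2}$.

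For four of the six pairs this concatenated quantification already lies in a fragment whose nonemptiness we have analyzed: $\nfhe\subseteq\nfhe$ and $\nfhef\subseteq\nfhe$ yield an $\nfhef$ (Theorem~\ref{thm:nfhef.nonemptiness}); $\nfhf\subseteq\nfhe$ yields an $\nfhf$, and $\nfhe\subseteq\nfhf$ yields an $\nfhe$ (both via Theorem~\ref{thm:nfhe.nfhf.nonemptiness}). In each case the algorithm runs on the implicitly represented intersection; the only exponential component is the subset construction for $\overline{\hat\A_2}$, which costs $|Q_2|$ bits, so the total space is polynomial in $|\A_1|+|\A_2|$, yielding \comp{PSPACE}.

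The main obstacle is the two remaining cases, $\nfhf\subseteq\nfhf$ and $\nfhef\subseteq\nfhf$, for which the naive intersection quantification is $\forall^*\exists^*$ or $\exists^*\forall^*\exists^*$ and thus outside every handled fragment. Here the key is a \emph{small-witness} property: if $\hlang{\A_1}\not\subseteq\hlang{\A_2}$ is witnessed by some $S$ with $\A_1$'s $\exists$-witness $\bar w=(w_1,\ldots,w_m)$ (with $m=0$ in the pure-$\forall$ case) and $\A_2$'s rejecting $k_2$-tuple $\bar v=(v_1,\ldots,v_{k_2})$, then the sub-hyperword $S'=\{w_1,\ldots,w_m\}\cup\{v_1,\ldots,v_{k_2}\}\subseteq S$ itself lies in $\hlang{\A_1}\setminus\hlang{\A_2}$: the $\forall$-condition of $\A_1$ is closed under taking subsets (fewer assignments to verify), the $\exists$-witness $\bar w$ still lives in $S'$, and the rejecting tuple $\bar v$ of $\A_2$ is still present. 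Searching for such an $S'$ is exactly the nonemptiness of an $\nfhef$ automaton $\A'$ with quantification $\exists^{m+k_2}\forall^{k_1-m}$ whose underlying NFA simultaneously runs $\hat\A_1$ on the appropriate $k_1$-projection (the $\exists$-witness plus the $\forall$-tuple) and the on-the-fly complement of $\hat\A_2$ on the $k_2$-projection (the guessed rejecting tuple); Theorem~\ref{thm:nfhef.nonemptiness} again gives \comp{PSPACE}.

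For the matching lower bound I will reduce from NFA universality (\comp{PSPACE}-complete): given NFA $A$ over $\Sigma$, take $\A_1$ to be any NFH in the source fragment that accepts every nonempty hyperword, and $\A_2$ to be a single-variable NFH with underlying automaton $A$ and the quantifier appropriate to the target fragment ($\exists x_1$ for $\nfhe$, $\forall x_1$ for $\nfhf$). A straightforward case check then gives $\hlang{\A_1}\subseteq\hlang{\A_2}$ iff $\lang{A}=\Sigma^*$; hardness for sources in $\nfhef$ follows since $\nfhe\subseteq\nfhef$ as a class.
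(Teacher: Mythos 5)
Your proposal is correct and follows the same skeleton as the paper's proof: reduce containment to nonemptiness of $\A_1\cap\overline{\A_2}$, observe that complementation dualizes the quantifier prefix, and argue that the exponential blow-up in $\overline{\hat\A_2}$ is harmless because the relevant nonemptiness procedures use space logarithmic in the state space (for $\nfhe/\nfhf$) or polynomial in the number of variables (for $\nfhef$). The two places where you diverge are both legitimate. First, for the lower bound you reduce from NFA universality rather than NFA containment; both give all six source/target combinations with equally routine case checks. Second, and more substantively, for the two pairs whose naively concatenated prefix ($\forall^{*}\exists^{*}$ or $\exists^{*}\forall^{*}\exists^{*}$) falls outside the handled fragments, the paper simply asserts that the intersection of Theorem~\ref{thm:nfh.operations} can be built with \emph{any} interleaving of the two prefixes (valid because the two variable blocks are constrained independently and the matrix is a conjunction), which immediately yields an $\nfhef$. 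You instead prove a small-witness property -- a counterexample to containment can be shrunk to the $\exists$-witness of $\A_1$ together with the rejecting $k_2$-tuple of $\A_2$, using downward closure of the $\forall$-part -- and then encode the search for such a witness as $\nfhef$ nonemptiness. The two routes produce essentially the same automaton, but your argument makes explicit the semantic fact that the paper's interleaving claim leaves implicit, which is arguably the most delicate step of the upper bound; the paper's version is shorter but leans on an unproved (though true) reordering principle.
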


\begin{proof}[Proof Sketch]
The lower bound follows from the \comp{PSPACE-hardness} of the containment problem for NFA. 
For the upper bound, for two NFH $\A_1$ and $\A_2$, we have that 
$\hlang{\A_1}\subseteq\hlang{\A_2}$ iff 
$\hlang{\A_1}\cap\overline{\hlang{\A_2}}  =  \emptyset$. 
We can use the constructions in the proof of Theorem~\ref{thm:nfh.operations} 
to compute a matching NFH $\A = 
\A_1\cap\overline{\A_2}$, and check its nonemptiness. Complementing $\A_2$ is exponential in its number of states, and the intersection construction is polynomial. 

If $\A_1\in\nfhe$ and $\A_2\in\nfhf$ or vice versa, then $\A$ is an $\nfhe$ or 
$\nfhf$, respectively, whose nonemptiness can be decided in space that is 
logarithmic in $|\A|$.   

It follows from the construction in the proof of Theorem~\ref{thm:nfh.operations}, that the quantification condition of $\A$ may be any 
interleaving of the quantification conditions of the two intersected NFH.
Therefore, for the rest of the fragments, we can construct the intersection such that $\A$ is an $\nfhef$. 

The PSPACE upper bound of Theorem~\ref{thm:nfhef.nonemptiness} is derived from the number of variables and not from the state-space of the NFH. Therefore, while $|\bar{\A_2}|$ is exponential in the number of states of $\A_2$, checking the nonemptiness of $\A$ is in \comp{PSPACE}. 
\end{proof}

\section{Learning NFH}
\label{sec:learning}

In this section, we introduce $\lstar$-based learning algorithms for the fragments $\nfhf$ and 
$\nfhe$. 
We first survey the $\lstar$ algorithm \cite{Angluin87}, and then describe the relevant 
adjustments for our case.

\subsection{Angluin's $\lstar$ Algorithm}

$\lstar$ consists of two entities: a {\em learner}, who wishes to learn a DFA $A$ for an unknown (regular) language $\cal L$, and a {\em teacher}, who knows $\cal L$.
During the learning process, the learner asks the teacher two types of queries: {\em membership queries} (``is the word $w$ in $\cal L$?'') and {\em equivalence queries} (``is $A$ a DFA for $\cal L$?'').

The learner maintains $A$ in the form of an {\em observation table} $T$ of truth values, whose rows $D, D\cdot\Sigma$ and columns $E$ are sets of words over $\Sigma$, where $D$ is prefix-closed, and $E$ is suffix-closed. Initially, $D = E = \{\epsilon\}$. 
For a row $d$ and a column $e$, the entry for $T(d,e)$ is $\true$ iff $d\cdot e \in{\cal L}$. The entries are filled via membership queries.
The vector of truth values for row $d$ is denoted $\row(d)$. Intuitively, the rows in $D$ determine the states of $A$, and the rows in $D\cdot\Sigma$ determine the transitions of $A$: the state $\row(d\cdot\sigma)$ is reached from $\row(d)$ upon reading $\sigma$. 

The learner updates $T$ until it is {\em closed}, which, intuitively, ensures a full transition relation and {\em consistent}, which, intuitively, ensures a deterministic transition relation.
If $T$ is not closed or not consistent then more rows or more columns are added to $T$, respectively. 

\stam{
The learner updates the table until it is both {\em closed} and {\em consistent}. The table is closed if for every $d\in D$ and $\sigma\in \Sigma$, there is $d'\in D$ such that $\row(d') = \row(d\cdot\sigma)$. Intuitively, a closed table assures a full transition relation.
The table is consistent if for every $d_1,d_2\in D$ and every $\sigma \in \Sigma$, it holds that if $\row(d_1) = \row(d_2)$, then $\row(d_1\cdot\sigma) = \row(d_2\cdot\sigma)$. Intuitively, a consistent table assures a deterministic transition relation. 

If the table is not closed, then the missing row $d'$ is added to $D$, and $d'\cdot\sigma$ is added to $D\times\Sigma$ for every $\sigma\in \Sigma$. The new entries in $T$ are filled via membership queries. Note that this leaves $D$ prefix-closed.
If the table is inconsistent, then there is $e\in E$ for which $d_1\cdot\sigma\cdot e \neq d_2\cdot\sigma\cdot e$. The word $\sigma\cdot e$ then separates $\row(d_1)$ from $\row(d_2)$, and is added to $E$ (notice that this leaves $E$ suffix-closed). The new table entries are filled accordingly, and now $\row(d_1)\neq \row(d_2)$. 
}

When $T$ is closed and consistent, the learner constructs $A$: The states are the rows of $D$, the initial state is $\row(\epsilon)$, the accepting states are these in which $T(d,\epsilon) = \true$, and the transition relation is as described above. The learner then submits an equivalence query. If the teacher confirms, the algorithm terminates. Otherwise, the teacher returns a counterexample $w\in \lang{A}$ but $w\notin {\cal L}$ (which we call a {\em positive counterexample}), or $w\notin \lang{A}$ but $w\in {\cal L}$ (which we call a {\em negative counterexample}). The learner then adds $w$ and all its suffixes to $E$, and proceeds to construct the next candidate DFA $A$. 

It is shown in \cite{Angluin87} that as long as $A$ is not a DFA for $\cal L$, it has less states than a minimal DFA for $\cal L$. Further, every change in the table adds at least one state to $A$. Therefore, the procedure is guaranteed to terminate successfully with a minimal DFA $A$ for $\cal L$. 

The correctness of the $\lstar$ algorithm follows from the fact that regular languages have a {\em canonical form}, which guarantees a single minimal DFA for a regular language $\cal L$. To enable an $\lstar$-based algorithm for  $\nfhf$ and $\nfhe$, we first define canonical forms for these fragments.

\subsection{Canonical Forms for the Alternation-Free Fragments}\label{subsec:canonical.forms}

%Throughout this section, we discuss NFH over $\Sigma$ and a set of variables $X$.
We begin with the basic terms on which our canonical forms are based.
\begin{definition}
\begin{enumerate}
\item An $\nfhf$ $\A_\forall$ is {\em sequence complete} if for every word ${\bi w}$, it holds that $\hat{\A_\forall}$ accepts ${\bi w}$ iff it accepts every sequence of ${\bi w}$. 
\item An $\nfhe$ $\A_\exists$ is {\em permutation complete} if for every word 
${\bi w}$, it holds that $\hat \A_\exists$ accepts ${\bi w}$ iff it accepts every permutation of 
${\bi w}$. 
\end{enumerate}
\end{definition}

An $\nfhf$ $\A_\forall$ accepts a hyperword $S$ iff $\hat\A_\forall$ accepts every sequence of size $k$ of $S$. If some sequence is missing from $\lang{\hat\A}$, then removing the rest of the sequences of $S$ from $\lang{\hat\A_\forall}$ does not affect the non-acceptance of $S$. Therefore, the underlying automata of sequence-complete $\nfhf$ only accept necessary sequences. 
Similarly, an $\nfhe$ $\A_\exists$ accepts a hyperword $S$ iff $\hat\A_\exists$ accepts some permutation $p$ of size $k$ of words in $S$. Adding the rest of the permutations of $p$ to $\lang{\hat\A_\exists}$ does not affect the acceptance of $S$. Therefore, the underlying automata of permutation-complete $\nfhe$ only reject the necessary permutations of every hyperword. 
As a conclusion, we have the following.

\begin{lemma}
\label{lem:langeq}
\begin{enumerate}
\item Let $\A_\forall$ be an $\nfhf$, and let $\A'_\forall$ be a sequence-complete $\nfhf$ over $\Sigma$ and $X$ such that for every word ${\bi w}$, the underlying NFA
$\hat{\A'_\forall}$ accepts ${\bi w}$ iff $\hat{\A_\forall}$ accepts every sequence of ${\bi w}$. Then $\hlang{\A_\forall} =\hlang{\A'_\forall}$.
\item Let $\A_\exists$ be an $\nfhe$, and let $\A'_\exists$ be a permutation-complete $\nfhe$ over $\Sigma$ and $X$ such that for every word ${\bi w}$, the underlying NFA $\hat{\A_\exists}$ accepts ${\bi w}$ iff $\hat{\A'_\exists}$ accepts all permutations of ${\bi w}$. 
Then $\hlang{\A_\exists} =\hlang{\A'_\exists}$.
\end{enumerate}

\end{lemma}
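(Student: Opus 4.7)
The plan is to prove both parts by direct unfolding of hyperlanguage membership against the $\forall/\exists$ acceptance semantics, and then using the defining iff that links the underlying NFA $\hat{\A}$ to its canonical counterpart $\hat{\A}'$. The key structural fact on which both arguments rest is that whenever $v : X \rightarrow S$ is an assignment and $\zeta$ is a sequence (for Part~1) or permutation (for Part~2) of $(1,\ldots,k)$, the rearranged assignment $v_\zeta$ is again an assignment $X \rightarrow S$, since its range is contained in (equal to, in the permutation case) the range of $v$, which lies in $S$; moreover $\zip(v_\zeta)$ is the corresponding sequence or permutation of $\zip(v)$.

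For Part~1, I will unfold $S \in \hlang{\A'_\forall}$ as ``$\hat{\A'_\forall}$ accepts $\zip(v)$ for every $v : X \rightarrow S$,'' and, via the defining property of $\A'_\forall$, rewrite this as ``$\hat{\A_\forall}$ accepts $\zip(v_\zeta)$ for every $v : X \rightarrow S$ and every sequence $\zeta$.'' For the $\subseteq$ inclusion, specializing $\zeta$ to the identity sequence yields $S \in \hlang{\A_\forall}$. For the $\supseteq$ inclusion, if $S \in \hlang{\A_\forall}$ then $\hat{\A_\forall}$ accepts $\zip(v')$ for every $v' : X \rightarrow S$; since each $v_\zeta$ is of this form, the rewritten condition holds and hence $S \in \hlang{\A'_\forall}$.

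For Part~2 the argument is dual. From the stated iff, together with the permutation-completeness of $\A'_\exists$ (which collapses ``$\hat{\A'_\exists}$ accepts all permutations of $\bi w$'' to ``$\hat{\A'_\exists}$ accepts $\bi w$''), I obtain that for every $\bi w$, $\hat{\A_\exists}$ accepts $\bi w$ iff $\hat{\A'_\exists}$ accepts $\bi w$. Unfolding $S \in \hlang{\A_\exists}$ as ``there exists $v : X \rightarrow S$ with $\hat{\A_\exists}$ accepting $\zip(v)$'' and applying this equivalence for each such $v$ shows it is equivalent to ``there exists $v : X \rightarrow S$ with $\hat{\A'_\exists}$ accepting $\zip(v)$,'' i.e.\ $S \in \hlang{\A'_\exists}$. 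As a cross-check I will also verify directly, via the closure fact above, that the natural witness translation works: a witness $v$ for $\A_\exists$ serves verbatim as a witness for $\A'_\exists$, and conversely if $\hat{\A_\exists}$ accepts some permutation $\zip(v_\zeta)$ then $v_\zeta$ is itself an assignment $X \rightarrow S$.

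I do not anticipate any nontrivial obstacle: the only substantive step is the closure of assignments into $S$ under sequences and permutations, which is immediate from the definitions, and the rest is a routine manipulation of quantifiers against the defining properties of $\A'_\forall$ and $\A'_\exists$.
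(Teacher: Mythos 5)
Your Part~1 argument is essentially the paper's: the paper gets the inclusion $\hlang{\A'_\forall}\subseteq\hlang{\A_\forall}$ from $\lang{\hat\A'_\forall}\subseteq\lang{\hat\A_\forall}$ (your specialization of $\zeta$ to the identity sequence), and the reverse inclusion from exactly your observation that every sequence $v_\zeta$ of an assignment $v:X\rightarrow S$ is again an assignment into $S$, so acceptance of all assignments by $\hat\A_\forall$ yields acceptance of all their sequences and hence acceptance by $\hat\A'_\forall$. For Part~2 you take a genuinely different, shorter route: you combine the stated iff with permutation-completeness to collapse the hypothesis to $\lang{\hat\A_\exists}=\lang{\hat\A'_\exists}$, after which equality of hyperlanguages is immediate. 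This is logically sound against the lemma as literally worded, but note that the paper's own proof deliberately uses only the weaker facts $\lang{\hat\A_\exists}\subseteq\lang{\hat\A'_\exists}$ and ``every word of $\lang{\hat\A'_\exists}$ is a permutation of some word of $\lang{\hat\A_\exists}$,'' i.e.\ it treats $\lang{\hat\A'_\exists}$ as the permutation \emph{closure} of $\lang{\hat\A_\exists}$. That weaker reading is the one matched by the construction in Lemma~\ref{lem:permutation.sequence.complete}, where $\hat\A'_\exists = \bigcup_\zeta A_\zeta$ accepts in general a strict superset of $\lang{\hat\A_\exists}$ and therefore does \emph{not} satisfy the literal ``$\hat\A_\exists$ accepts ${\bi w}$ iff $\hat\A'_\exists$ accepts all permutations of ${\bi w}$'' in the backward direction. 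So your shortcut buys brevity but would not apply to the canonical-form object the lemma is meant to justify; the paper's witness-and-permutation argument (which your ``cross-check'' paragraph essentially reproduces) is the version that survives under the intended, weaker hypothesis.
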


Next, we show that we can construct a sequence- or permutation-complete NFH for a given $\nfhf$ or $\nfhe$, respectively. Intuitively, given $\A$,
for every sequence (permutation) $\zeta$ of $(1,\ldots k)$, we construct an NFA that runs on ${\bi w}_\zeta$ in the same way that $\hat\A$ runs on $\bi w$, for every $\bi w$.
The underlying NFA we construct for the $\nfhf$ and $\nfhe$ are the intersection and union, respectively, of all these NFA. 

\begin{lemma}\label{lem:permutation.sequence.complete}
Every $\nfhf$ ($\nfhe$) $\A$ has an equivalent sequence-complete (permutation-complete) $\nfhf$ ($\nfhe$) $\A'$ over the same set of variables. 
\end{lemma}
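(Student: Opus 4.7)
The plan is to build, for each sequence or permutation $\zeta$ of $(1,\ldots,k)$, an auxiliary NFA that reorders its input word according to $\zeta$ before feeding it to $\hat\A$, and then combine these auxiliary NFA by intersection (for $\nfhf$) or union (for $\nfhe$) to obtain the underlying NFA of $\A'$. Concretely, for a fixed sequence $\zeta = (i_1,\ldots,i_k)$, I would define an NFA $\hat\A_\zeta$ with the same set of states, initial states, and accepting states as $\hat\A$, and a transition relation $\delta_\zeta$ obtained by relabeling: for every $(q,(\tau_1,\ldots,\tau_k),q') \in \delta$, include in $\delta_\zeta$ every $(q,(\sigma_1,\ldots,\sigma_k),q')$ with $\sigma_{i_j} = \tau_j$ for all $j$. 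A straightforward induction on the length of a run then yields that $\hat\A_\zeta$ accepts ${\bi w}$ iff $\hat\A$ accepts ${\bi w}_\zeta$.

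For an $\nfhf$ $\A_\forall$, I would take $\hat{\A'_\forall}$ to be the product automaton recognising $\bigcap_\zeta \lang{\hat\A_\zeta}$, where $\zeta$ ranges over all sequences of $(1,\ldots,k)$, keeping the same variable set and quantifier prefix. By construction, $\hat{\A'_\forall}$ accepts ${\bi w}$ iff $\hat\A_\forall$ accepts every sequence of ${\bi w}$. Sequence completeness of $\A'_\forall$ follows because the composition $({\bi w}_\eta)_\zeta$ is itself of the form ${\bi w}_\theta$ for some sequence $\theta$, and the identity $(1,\ldots,k)$ is a sequence; Lemma~\ref{lem:langeq}(1) then gives $\hlang{\A'_\forall} = \hlang{\A_\forall}$.

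For an $\nfhe$ $\A_\exists$, I would instead take $\hat{\A'_\exists}$ to be the disjoint-union NFA for $\bigcup_\pi \lang{\hat\A_\pi}$, with $\pi$ ranging only over permutations of $(1,\ldots,k)$. Then $\hat{\A'_\exists}$ accepts ${\bi w}$ iff $\hat\A_\exists$ accepts some permutation of ${\bi w}$, and permutation completeness follows from closure of the symmetric group under composition. Lemma~\ref{lem:langeq}(2) then gives $\hlang{\A'_\exists} = \hlang{\A_\exists}$. In both cases the variable set $X$ and the quantifier prefix $\alpha$ are preserved, so $\A'$ lies in the same fragment as $\A$.

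The only real subtlety lies in the relabeling step defining $\hat\A_\zeta$ when $\zeta$ has repeated indices or omits some indices: a repeat $i_j = i_{j'}$ forces the $i_j$-th coordinate of the new input letter to be both $\tau_j$ and $\tau_{j'}$, so the induced transition is feasible only when $\tau_j = \tau_{j'}$, while omitted indices leave the corresponding coordinate unconstrained and contribute a fan-out of $|\Sigma \cup \{\#\}|$ letters per original transition. Once this bookkeeping is handled, the remainder reduces to the textbook product and union constructions on NFA, and no further obstacle is anticipated.
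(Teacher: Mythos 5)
Your proposal matches the paper's proof essentially exactly: the paper also builds, for each sequence (resp.\ permutation) $\zeta$, a relabeled NFA $A_\zeta$ that accepts ${\bi w}$ iff $\hat\A$ accepts ${\bi w}_\zeta$ (reusing the construction from the proof of Theorem~\ref{thm:nfhef.nonemptiness}), and takes $\bigcap_\zeta A_\zeta$ for $\nfhf$ and $\bigcup_\zeta A_\zeta$ for $\nfhe$. Your additional remarks on closure under composition of sequences and on the bookkeeping for repeated or omitted indices are correct and, if anything, more explicit than the paper's sketch.
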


Finally, as the following theorem shows, sequence- and permutation- complete NFH offer a unified model for the alternation-free fragments.

\begin{theorem}\label{thm:permutation.sequence.complete}
Let $\A_1$ and $\A_2$ be two sequence-complete (permutation-complete) $\nfhf$ ($\nfhe$) over the same set of variables. Then $\hlang{\A_1}=\hlang{\A_2}$ iff  $\lang{\hat\A_1} = \lang{\hat \A_2}$.
\end{theorem}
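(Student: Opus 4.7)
I plan to prove both directions separately. The $(\Leftarrow)$ direction is essentially free: since $\A_1$ and $\A_2$ share the same (purely $\forall$ or purely $\exists$) quantification condition over the same variable set $X$, whether a hyperword is accepted depends solely on which zipped assignments lie in the underlying NFA language, so $\lang{\hat\A_1}=\lang{\hat\A_2}$ immediately passes to $\hlang{\A_1}=\hlang{\A_2}$.

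For the $(\Rightarrow)$ direction I proceed by contrapositive. Suppose $\lang{\hat\A_1}\neq\lang{\hat\A_2}$ and, without loss of generality, fix $\bi w\in\lang{\hat\A_1}\setminus\lang{\hat\A_2}$ with components $w_1,\ldots,w_k$. The natural candidate for a separating hyperword is $S=\unzip(\bi w)=\{w_1,\ldots,w_k\}$, viewed as a set, and I aim to derive $S\in\hlang{\A_1}\setminus\hlang{\A_2}$, contradicting $\hlang{\A_1}=\hlang{\A_2}$.

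In the $\nfhf$ case the argument is clean. By sequence-completeness of $\A_1$, every sequence of $\bi w$ is also accepted by $\hat{\A_1}$. Since every assignment $v\colon X\to S$ has image in the component set of $\bi w$, its zipping $\zip(v)$ is such a sequence and is accepted; hence $S\in\hlang{\A_1}$. Conversely, the specific assignment $v(x_i)=w_i$ yields $\zip(v)=\bi w\notin\lang{\hat\A_2}$, witnessing that not every assignment from $X$ to $S$ is accepted by $\hat{\A_2}$, so $S\notin\hlang{\A_2}$.

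The $\nfhe$ case follows the same template, and here I expect the main technical difficulty. Again $S\in\hlang{\A_1}$ is immediate from $\bi w$ being the zip of the assignment $v(x_i)=w_i$. Establishing $S\notin\hlang{\A_2}$ requires ruling out \emph{every} assignment $v\colon X\to S$ with $\zip(v)\in\lang{\hat\A_2}$. Permutation-completeness of $\A_2$ already disposes of the assignments that merely reorder $\bi w$'s components: from $\bi w\notin\lang{\hat\A_2}$ it follows that no permutation of $\bi w$ lies in $\lang{\hat\A_2}$. The delicate part is handling assignments $v$ whose image is a strict sub-multiset of $\{w_1,\ldots,w_k\}$. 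I plan to address this by choosing $\bi w$ to minimize $|\unzip(\bi w)|$ among all words in the symmetric difference $\lang{\hat\A_1}\triangle\lang{\hat\A_2}$; any such smaller accepting assignment for $\A_2$ either lies outside $\lang{\hat\A_1}$ as well (contradicting the minimality of $\bi w$) or lies inside $\lang{\hat\A_1}$ (in which case it exhibits a hyperword of strictly smaller support in $\hlang{\A_1}\cap\hlang{\A_2}$, which, combined with the assumed hyperlanguage equality and the minimality choice, is engineered to yield a contradiction). This interplay of minimality and permutation-completeness is the heart of the proof.
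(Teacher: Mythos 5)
Your $(\Leftarrow)$ direction and your $\nfhf$ argument are correct and essentially coincide with the paper's proof: the paper runs the forward implication directly rather than by contrapositive, but the content is the same (every assignment $v\colon X\to\unzip({\bi w})$ zips to a sequence of ${\bi w}$, so sequence-completeness of $\A_1$ forces $\unzip({\bi w})\in\hlang{\A_1}$, while the identity assignment $v(x_i)=w_i$ witnesses $\unzip({\bi w})\notin\hlang{\A_2}$).

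The gap is in the $\nfhe$ half, exactly where you place it, and the minimality device you propose does not close it. Take $k=2$, $\alpha=\exists x_1\exists x_2$, $\lang{\hat\A_1}=\{\zip(a,b),\zip(b,a),\zip(a,a)\}$ and $\lang{\hat\A_2}=\{\zip(a,a)\}$. Both NFH are permutation-complete, and both hyperlanguages equal $\{S : a\in S\}$, since $\zip(a,a)$ alone already witnesses acceptance of every $S$ containing $a$; yet $\lang{\hat\A_1}\neq\lang{\hat\A_2}$. Running your scheme on this pair: the symmetric difference is $\{\zip(a,b),\zip(b,a)\}$, whose words all have support size $2$, so you pick ${\bi w}=\zip(a,b)$ and $S=\{a,b\}$; the $\A_2$-accepting assignment is $\zip(a,a)$, which \emph{does} lie in $\lang{\hat\A_1}$, and the resulting smaller hyperword $\{a\}$ sits in $\hlang{\A_1}\cap\hlang{\A_2}$ without contradicting minimality or the assumed hyperlanguage equality. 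The second branch of your dichotomy is simply consistent, so no amount of engineering will extract a contradiction. You have in fact isolated a real defect rather than a detail you failed to fill in: the paper's own proof of this half infers from $\unzip({\bi w})\in\hlang{\A_2}$ that some \emph{permutation} of ${\bi w}$ lies in $\lang{\hat\A_2}$, silently ignoring witnessing assignments that repeat or drop component words---precisely the sequences you worry about. For the $\exists$ half to go through one needs the underlying language to be closed under all sequences/sub-assignments of accepted tuples (equivalently, determined by the induced hyperlanguage on supports), not merely under permutations.
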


\stam{
\begin{lemma}
\label{lem:eqseq}
Every $\nfhf$ $\A$ has an equivalent sequence-complete $\nfhf$ $\A'$ over the 
same set of variables. 
\end{lemma}

\begin{lemma}
\label{lem:seqcomp}
Let $\A_1$ and $\A_2$ be two sequence-complete $\nfhf$ over the same set of 
variables $X$.
Then $\lang{\A_1}=\lang{\A_2}$ iff  $\lang{\hat\A_1} = \lang{\hat \A_2}$.
\end{lemma}
}

Regular languages have a canonical form, which are minimal DFA. We use this property to define canonical forms for $\nfhf$ and $\nfhe$ as sequence-complete (permutation-complete) $\nfhf$ ($\nfhe$) with a minimal number of variables and a minimal underlying DFA. 

\stam{

\subsubsection{A Canonical Form for $\nfhe$}

We say that an $\nfhe$ $\A'$ is {\em permutation complete} if for every word 
$w$, it holds that $\hat \A'$ accepts $w$ iff it accepts every permutation of 
$w$. 

\begin{lemma}
\label{lem:premcomp}
Let $\A$ be an $\nfhe$.
Let $\A'$ be a permutation-complete $\nfhe$ over $\Sigma$ and $X$ with the following property: for every word $w$, the underlying NFA $\hat\A$ accepts a word $w$ iff $\hat\A'$ accepts all permutations of $w$. %Notice that this is the dual property of the one listed for $\nfhf$.
Then $\lang{\A} =\lang{\A'}$.
\end{lemma}

\begin{lemma}
\label{lem:eqpermcomp}
Every $\nfhe$ has an equivalent permutation-complete $\nfhe$ over the same set 
of variables. 
\end{lemma}

\begin{lemma}
\label{lem:permcompsame}
Let $\A_1$ and $\A_2$ be two permutation-complete $\nfhe$ over the same set of 
variables $X$.
Then $\lang{\A_1}=\lang{\A_2}$ iff  $\lang{\hat\A_1} = \lang{\hat \A_2}$.
\end{lemma}

We define a canonical form for $\nfhe$ as a minimal deterministic 
permutation-complete $\nfhe$ with a minimal number of variables.

}% of stam

\subsection{Learning $\nfhf$ and $\nfhe$}

We now describe our $\lstar$-based learning algorithms for $\nfhe$ and $\nfhf$.
These algorithms aim to learn an NFH with the canonical form defined in Section~\ref{subsec:canonical.forms} for a target hyperlanguage $\hl$. Figure~\ref{fig:learning_flow} presents the overall flow of the learning algorithms for both fragments.

In the case of hyperautomata, the membership queries and the counterexamples provided by the teacher consist of hyperwords. Similarly to \cite{fht19}, we assume a teacher that returns a minimal counterexample in terms of size of the hyperword. 

During the procedure, the learner maintains an NFH $\A$ via an observation table for $\hat\A$, over the alphabet $\hat\Sigma  = (\Sigma\cup\{\#\})^k$, where $k$ is initially set to $1$. 
When the number of variables is increased to $k'>k$, the alphabet of $\hat\A$ is extended accordingly to $(\Sigma\cup\{\#\})^{k'}$. 
To this end, we define a function $\uparrow_k^{k'}:(\Sigma\cup\{\#\})^k\rightarrow (\Sigma\cup\{\#\})^{k'}$, which replaces every letter $(\sigma_1,\ldots \sigma_k)$, with $(\sigma_1,\ldots \sigma_k)+(\sigma_k)^{k'-k}$. That is, the last letter is duplicated to create a $k'$-tuple. 
We extend $\uparrow_k^{k'}$ to words: $\uparrow_k^{k'}({\bi w})$ is obtained by replacing every letter $\sigma$ in ${\bi w}$ with $\uparrow_k^{k'}(\sigma)$. 
Notice that, for both fragments, if $\unzip(d\cdot e) \in \hlang{\A}$, then $\unzip(\uparrow_k^{k'}(d\cdot e)) \in \hlang{\A}$. 
Accordingly, when the number of variables is increased, every word ${\bi w}$ in the rows and columns of $T$ is replaced with $\uparrow_k^{k'}({\bi w})$, an action which we denote by $\uparrow_k^{k'}(T)$.

\subsubsection{Learning $\nfhf$}

In the case of $\nfhf$, when the teacher returns a counterexample $S$, it holds 
that if $|S|> k$, then $S$ must be positive. Indeed, assume by way of 
contradiction that $S$ is negative. Then, for every $k$ words $w_1,\ldots, w_k$ 
in $S$, it holds that $\zip(w_1,\ldots, w_k)\in \lang{\hat\A}$, but $S\notin 
\hl$. Therefore, in an $\nfhf$ $\A'$ for $\hl$, there exists some word of the 
form $w = \zip(w_1,\ldots w_k)$ such that $w_i\in S$ for $1\leq i \leq k$, and 
$w\notin \lang{\hat\A'}$. As a result, $\{w_1,\ldots, w_k\}\notin \hl$. Since 
$\zip(w_1,\ldots, w_k)$ and all its sequences are in $\lang{\hat\A}$, then a 
smaller counterexample is $\{w_1,\ldots, w_k\}$, a contradiction to the 
minimality of $S$. 

In fact, if $|S|>k$, then it must be that $|S| = k+1$. Indeed, since $S$ is a 
positive counterexample, and $\A$ accepts all representations of subsets of size 
$k$ of $S$ (otherwise the teacher would return a counterexample of size $k$), 
then there exists a subset $S'\subseteq S$ of size $k+1$ that should be 
represented, but is not. Therefore, $S'$ is a counterexample of size $k+1$.

When a counterexample $S$ of size $k+1$ is returned, the learner updates $k\leftarrow k+1$, updates $T$ to $\uparrow_k^{k+1}(T)$, arbitrarily selects a permutation $p$ of the words in $S$, and adds  $\zip(p)$ and all its suffixes to $E$.
In addition, it updates $D\cdot\hat\Sigma$ in accordance with the new updated $\hat\Sigma$, and fills in the missing entries.  

When $|S| \leq k$, then the counterexample is either positive or negative.
If $S$ is positive, then there exists some permutation $p$ of the words in $S$ such that $\A$ does not accept $\zip(p)$ (a permutation and not a proper sequence, or there would be a smaller counterexample). The learner finds such a permutation $p$, and adds $\zip(p)$ and all its suffixes to $E$. Notice that $\zip(p)$ does not already appear in $T$, since a membership query would have returned ``yes'', and so $\hat\A$ would have accepted $\zip(p)$.

if $S$ is negative, then $\A$ accepts all sequences of length $k$ of words in $S$, though it should not.
Then there exists a permutation $p$ of the words in $S$ that does not appear in $T$, and which $\A$ accepts. The learner then finds such a permutation $p$ and adds $\zip(p)$ and all its suffixes to $E$.

If $p$ is a permutation of the words in $S$, and $S$ is a negative counterexample, then $\zip(p)$ should not be in $\lang{\hat\A}$ due to any other hyperword, and if $S$ is a positive counterexample, then it should be in $\lang{\hat\A}$ for every $S'$ such that $S\subseteq S'$. Therefore, the above actions by the learner are valid.  

When an equivalence query succeeds, then $\A$ is indeed an $\nfhf$ for $\hl$. 
However, $\A$ is not necessarily sequence-complete, as $\hat\A$ may accept a 
word ${\bi w} = \zip(w_1,\ldots, w_k)$ but not all of its sequences. This check 
can be performed by the learner directly on $\hat\A$. 
Notice that ${\bi w}$ does not occur in $T$, since a membership query on ${\bi w}$ would return ``no''. 
Once it is verified that $\A$ is not sequence-complete, the counterexample ${\bi w}$ (and all its suffixes) are added to $E$, and the procedure returns to the learning loop.
 
As we have explained above, variables are added only when necessary, and so the output $\A$ is indeed an NFH for $\hl$ with minimally many variables. 
The correctness of $\lstar$ and the minimality of the counterexamples returned by the teacher guarantee that for each $k'\leq k$, the run learns a minimal deterministic $\hat\A$ for hyperwords in $\hl$ that are represented by $k'$ variables. Therefore, a smaller $\hat\A'$ for $\hl$ does not exist, as restricting $\hat\A'$ to the first $k'$ letters in each $k$-tuple would produce a smaller underlying automaton for $k'$ variables, a contradiction. 

\begin{figure}[ht]
%\hrulefill
    \begin{center}
        \includegraphics[scale=0.5]{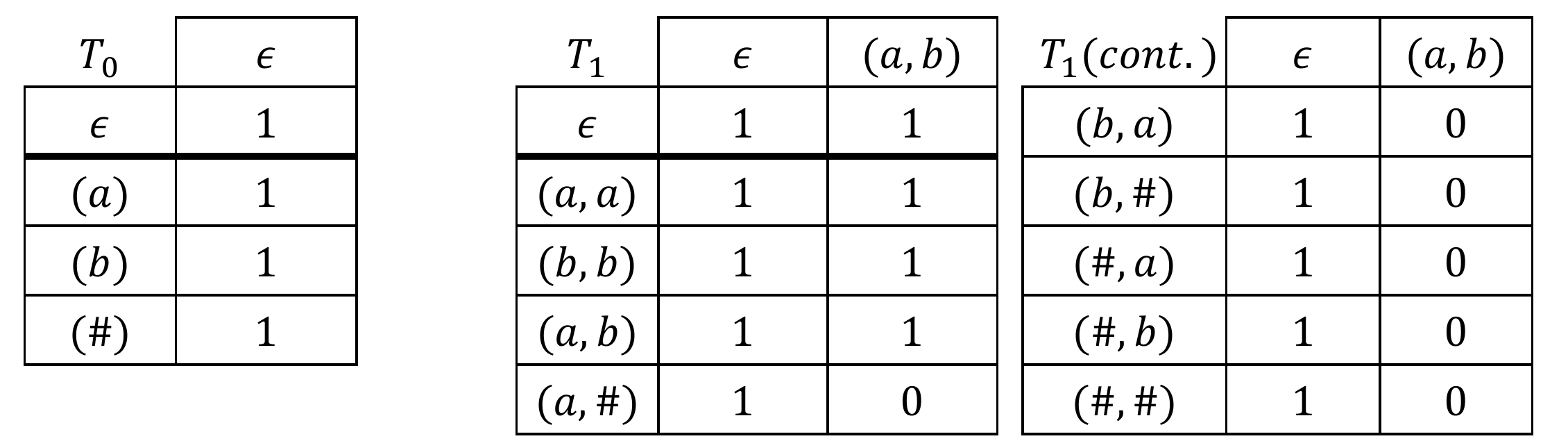}
    \end{center}
    \caption{The first stages of learning $\hlang{\A_3}$ of Figure~\ref{fig:nfh_examples}.}
    \label{fig:learning_nfhf}
%    \hrulefill
\end{figure}

\begin{example}
Figure~\ref{fig:learning_nfhf} displays the first two stages of learning $\hlang{\A_3}$ of Figure~\ref{fig:ordered}.
$T_0$ displays the initial table, with $D=E = \{\epsilon\}$, and $\hat\Sigma = \{a,b,\#\}$. since $\{a\}, \{b\}$, and $\{\epsilon\}$ are all in $\hlang{\A_3}$, the initial candidate NFH $\A$ includes a single variable, and, following the answers to the membership queries, a single accepting state.

Since $\hlang{\A_3}$ includes all hyperwords of size $1$, which are now accepted by $\A$, the smallest counterexample the teacher returns is of size $2$, which, in the example, is $\{a,b\}$. Table $T_1$ is then obtained from $T_0$ by applying $\uparrow_1^2$, updating the alphabet $\hat\Sigma$ to $\{a,b,\#\}^2$, and updating $D\cdot\hat\Sigma$ accordingly. $T_1$ is filled by submitting membership queries. For example, for $(b,a)\in D\cdot\hat\Sigma$ and $(a,b)\in E$, the learner submits a membership query for $\{ba, ab\}$, to which the teacher answers ``no''.
\end{example}

\subsubsection{Learning $\nfhe$}

The learning process for $\nfhe$ is similar to the one for $\nfhf$. We briefly describe the differences.

As in $\nfhf$, relying on the minimality of the counterexamples returned by the teacher guarantees that when a counterexample $S$ such that $|S|>k$ is returned, it is a positive counterexample. 
Indeed, assume by way of contradiction that $S$ is a negative counterexample of 
size $k'$. Since $\hat\A$ accepts $S$, there exists a word $\zip(w_1,\ldots, 
w_k)$ in $\lang{\hat\A}$ such that $\{w_1,\ldots, w_k\}\in S$. According to the 
semantics of $\exists$, if $\zip(w_1,w_2,\ldots, w_k)\in\lang{\hat\A}$ then 
$S\in\hlang\A$. Since $S\notin\hl$, we have that $\{w_1,\ldots, w_k\}$ is a 
smaller counterexample, a contradiction. 

Therefore, when the teacher returns a counterexample $S$ of size $k'>k$, the alphabet $\hat\Sigma$ is extended to $(\Sigma\cup\{\#\})^{k'}$, and the table $T$ is updated by $\uparrow_{k}^{k'}$, as is done for $\nfhf$.

If $|S|\leq k$, then $S$ may be either positive or negative. If $S$ is negative, then there exists some permutation of $S$ that is accepted by $\hat\A$. However, no such permutation is in $T$, as a membership query would have returned ``no''. Similarly, if $S$ is positive, then there exists no permutation of $S$ that $\hat\A$ accepts. In both cases, the learner chooses a permutation of $S$ and adds it, and all its suffixes, to $E$. 

As in the case of $\nfhf$, the success of an equivalence query does not necessarily imply that $\A$ is permutation-complete. 
If $\A$ is not permutation-complete, the learner finds a word ${\bi w}$ that is a permutation of ${\bi w}'$ such that ${\bi w}'\in\lang{\hat\A}$ but ${\bi w}\notin\lang{\hat\A}$, and adds ${\bi w}$ as a counterexample to $E$. 
The procedure then returns to the learning loop.

\begin{figure} 

\centering

\tikzset{every picture/.style={line width=0.75pt}} %set default line width to 0.75pt        

\scalebox{.95}{
\begin{tikzpicture}[x=0.75pt,y=0.75pt,yscale=-1,xscale=1]
%uncomment if require: \path (0,427.3333320617676); %set diagram left start at 0, and has height of 427.3333320617676

%Curve Lines [id:da3242324591734014] 
\draw    (78.05,195.76) .. controls (77.68,166.57) and (91.81,174.17) .. (90.87,193.02) ;
\draw [shift={(90.63,195.76)}, rotate = 276.98] [fill={rgb, 255:red, 0; green, 0; blue, 0 }  ][line width=0.08]  [draw opacity=0] (8.93,-4.29) -- (0,0) -- (8.93,4.29) -- cycle    ;

%Rounded Rect [id:dp5230596682902247] 
\draw   (285.46,140.05) .. controls (285.46,135.9) and (288.82,132.53) .. (292.98,132.53) -- (422.81,132.53) .. controls (426.96,132.53) and (430.33,135.9) .. (430.33,140.05) -- (430.33,162.59) .. controls (430.33,166.74) and (426.96,170.11) .. (422.81,170.11) -- (292.98,170.11) .. controls (288.82,170.11) and (285.46,166.74) .. (285.46,162.59) -- cycle ;
%Straight Lines [id:da19996498797143802] 
\draw    (360,195) -- (360,174) ;
\draw [shift={(360,171)}, rotate = 450] [fill={rgb, 255:red, 0; green, 0; blue, 0 }  ][line width=0.08]  [draw opacity=0] (8.93,-4.29) -- (0,0) -- (8.93,4.29) -- cycle    ;

%Straight Lines [id:da7742261006504376] 
\draw    (360.15,232.33) -- (360.15,254.42) ;
\draw [shift={(360.15,257.42)}, rotate = 270] [fill={rgb, 255:red, 0; green, 0; blue, 0 }  ][line width=0.08]  [draw opacity=0] (8.93,-4.29) -- (0,0) -- (8.93,4.29) -- cycle    ;

%Straight Lines [id:da7240079665223356] 
\draw    (318.59,285.97) -- (147.88,285.97) ;
\draw [shift={(144.88,285.97)}, rotate = 360] [fill={rgb, 255:red, 0; green, 0; blue, 0 }  ][line width=0.08]  [draw opacity=0] (8.93,-4.29) -- (0,0) -- (8.93,4.29) -- cycle    ;

%Straight Lines [id:da4355292395155117] 
\draw    (88.79,255.9) -- (88.97,236) ;
\draw [shift={(89,233)}, rotate = 450.53] [fill={rgb, 255:red, 0; green, 0; blue, 0 }  ][line width=0.08]  [draw opacity=0] (8.93,-4.29) -- (0,0) -- (8.93,4.29) -- cycle    ;

%Straight Lines [id:da9395518765143251] 
\draw    (140,216) -- (183.56,216) ;
\draw [shift={(186.56,216)}, rotate = 180] [fill={rgb, 255:red, 0; green, 0; blue, 0 }  ][line width=0.08]  [draw opacity=0] (8.93,-4.29) -- (0,0) -- (8.93,4.29) -- cycle    ;

%Straight Lines [id:da9876991440040634] 
\draw    (280.98,215.74) -- (316,215.98) ;
\draw [shift={(319,216)}, rotate = 180.4] [fill={rgb, 255:red, 0; green, 0; blue, 0 }  ][line width=0.08]  [draw opacity=0] (8.93,-4.29) -- (0,0) -- (8.93,4.29) -- cycle    ;

%Straight Lines [id:da8894863053773812] 
\draw    (318.59,278.87) -- (141.02,231.11) ;
\draw [shift={(138.12,230.33)}, rotate = 375.05] [fill={rgb, 255:red, 0; green, 0; blue, 0 }  ][line width=0.08]  [draw opacity=0] (8.93,-4.29) -- (0,0) -- (8.93,4.29) -- cycle    ;

%Rounded Rect [id:dp7304454638728446] 
\draw   (318.59,203.21) .. controls (318.59,199.09) and (321.93,195.76) .. (326.04,195.76) -- (404.26,195.76) .. controls (408.37,195.76) and (411.71,199.09) .. (411.71,203.21) -- (411.71,225.56) .. controls (411.71,229.67) and (408.37,233) .. (404.26,233) -- (326.04,233) .. controls (321.93,233) and (318.59,229.67) .. (318.59,225.56) -- cycle ;
%Straight Lines [id:da989641097303189] 
\draw    (285.46,140.05) -- (142.89,199.6) ;
\draw [shift={(140.12,200.76)}, rotate = 337.33000000000004] [fill={rgb, 255:red, 0; green, 0; blue, 0 }  ][line width=0.08]  [draw opacity=0] (8.93,-4.29) -- (0,0) -- (8.93,4.29) -- cycle    ;

%Straight Lines [id:da5782151978812646] 
\draw    (431,147.86) -- (487,147.86) ;
\draw [shift={(490,147.86)}, rotate = 180] [fill={rgb, 255:red, 0; green, 0; blue, 0 }  ][line width=0.08]  [draw opacity=0] (8.93,-4.29) -- (0,0) -- (8.93,4.29) -- cycle    ;

%Rounded Rect [id:dp9174543196821443] 
\draw   (186.68,203.21) .. controls (186.68,199.09) and (190.02,195.76) .. (194.13,195.76) -- (272.35,195.76) .. controls (276.46,195.76) and (279.8,199.09) .. (279.8,203.21) -- (279.8,225.56) .. controls (279.8,229.67) and (276.46,233) .. (272.35,233) -- (194.13,233) .. controls (190.02,233) and (186.68,229.67) .. (186.68,225.56) -- cycle ;
%Rounded Rect [id:dp013423076978946291] 
\draw   (318.59,263.62) .. controls (318.59,259.51) and (321.93,256.17) .. (326.04,256.17) -- (404.26,256.17) .. controls (408.37,256.17) and (411.71,259.51) .. (411.71,263.62) -- (411.71,285.97) .. controls (411.71,290.08) and (408.37,293.42) .. (404.26,293.42) -- (326.04,293.42) .. controls (321.93,293.42) and (318.59,290.08) .. (318.59,285.97) -- cycle ;
%Rounded Rect [id:dp7630850927481354] 
\draw   (47.01,203.54) .. controls (47.01,199.42) and (50.35,196.09) .. (54.46,196.09) -- (132.68,196.09) .. controls (136.79,196.09) and (140.12,199.42) .. (140.12,203.54) -- (140.12,225.89) .. controls (140.12,230) and (136.79,233.33) .. (132.68,233.33) -- (54.46,233.33) .. controls (50.35,233.33) and (47.01,230) .. (47.01,225.89) -- cycle ;
%Rounded Rect [id:dp6612685395327704] 
\draw   (50,263.2) .. controls (50,259.09) and (53.34,255.76) .. (57.45,255.76) -- (135.67,255.76) .. controls (139.78,255.76) and (143.11,259.09) .. (143.11,263.2) -- (143.11,285.55) .. controls (143.11,289.66) and (139.78,293) .. (135.67,293) -- (57.45,293) .. controls (53.34,293) and (50,289.66) .. (50,285.55) -- cycle ;
%Straight Lines [id:da6707127407283302] 
\draw    (29.79,211.24) -- (43.84,211.24) ;
\draw [shift={(46.84,211.24)}, rotate = 180] [fill={rgb, 255:red, 0; green, 0; blue, 0 }  ][line width=0.08]  [draw opacity=0] (8.93,-4.29) -- (0,0) -- (8.93,4.29) -- cycle    ;

% Text Node
\draw (141.81,182.12) node   [align=left] {$ $};
% Text Node
\draw (93.57,214.71) node   [align=left] {$\displaystyle T${\fontfamily{ptm}\selectfont  }~closed \\and consistent?};
% Text Node
\draw (98.22,162.32) node   [align=left] {no: membership queries};
% Text Node
\draw (233.24,214.38) node   [align=left] {construct $\displaystyle \mathcal{A}$};
% Text Node
\draw (357.89,151.32) node   [align=left] {permutation/sequence \\complete?};
% Text Node
\draw (229.13,150.68) node  [rotate=-337.24] [align=left] {no: add cex~{\fontfamily{ptm}\selectfont  }$\displaystyle w$};
% Text Node
\draw (365.15,214.38) node   [align=left] {equivalent?};
% Text Node
\draw (394.44,242.91) node  [rotate=-359.45] [align=left] {no: cex $\displaystyle S$};
% Text Node
\draw (365.15,274.8) node  [rotate=-359.45] [align=left] {$\displaystyle |S| >k?$};
% Text Node
\draw (378.6,183.41) node  [rotate=-359.45] [align=left] {yes};
% Text Node
\draw (163.6,207.37) node  [rotate=-359.45] [align=left] {yes};
% Text Node
\draw (291.5,204) node   [align=left] {{\fontfamily{ptm}\selectfont  }$\displaystyle \mathcal{A}$};
% Text Node
\draw (269.92,254.66) node  [rotate=-14.77] [align=left] {no: add cex $\displaystyle S${\fontfamily{ptm}\selectfont  }};
% Text Node
\draw (237.84,277.89) node  [rotate=-359.61] [align=left] {yes};
% Text Node
\draw (458.14,137.71) node  [rotate=-359.45] [align=left] {yes: {\fontfamily{ptm}\selectfont  }$\displaystyle \mathcal{A}$};
% Text Node
\draw (115.57,245.15) node  [rotate=-359.61] [align=left] {add~{\fontfamily{ptm}\selectfont  }$\displaystyle S${\fontfamily{ptm}\selectfont  }};
% Text Node
\draw (96.56,274.38) node   [align=left] {{\fontfamily{ptm}\selectfont  }$\displaystyle T\leftarrow \uparrow ^{k'}_{k}( T)${\fontfamily{ptm}\selectfont  }};

\end{tikzpicture}
}

\caption{The learning process flow for $\nfhf$ and $\nfhe$.}
\label{fig:learning_flow}
\vspace{-3mm}
\end{figure}

\section{Conclusion and Future Work}
\label{sec:concl}

We have introduced and studied {\em hyperautomata} and {\em hyperlanguages}, 
focusing on 
the basic model of regular hyperlanguages, in which the underlying automaton is 
a standard NFA. We have shown that regular hyperlanguages are closed under 
set operations (complementation, intersection, and union) and are capable of 
expressing important hyperproperties for information-flow security policies over 
finite traces. We have also investigated fundamental decision procedures such as 
checking nonemptiness and membership. We have shown that their regular 
properties allow the learnability of the alternation-free fragments. Fragments 
that combine the two types of quantifiers prove to be more challenging, and we 
leave their learnability to future work.

The notion of hyperlanguages, as well as the model of hyperautomata, can be 
lifted to handle hyperwords that consist of {\em infinite} words: instead 
of an underlying finite automaton, we can use any model that accepts infinite 
words. In fact, we believe using an underlying alternating B{\"u}chi automaton, 
such hyperautomata can express the entire logic of HyperLTL~\cite{cfkmrs14}, 
using the standard Vardi-Wolper construction for LTL~\cite{VW94} as 
basis. Our complexity results for the various decision procedures for NFH, 
combined with the complexity results shown in~\cite{fh16}, suggest that using 
hyperautomata would be optimal, complexity-wise, for handling HyperLTL.

Further future directions include studying non-regular hyperlanguages 
(e.g., context-free), and object hyperlanguages (e.g., trees). Other open 
problems include a full investigation of the complexity of decision procedures 
for alternating fragments of NFH.

\bibliography{bibliography,extra_bib}

\newpage
\appendix

\begin{center}

{\huge \bf Appendix}
 
\end{center}

\section{Proofs}

\subsection*{Theorem~\ref{thm:nfh.operations}}

\begin{proof}
{\bf Complementation. }
Let $\A$ be an NFH. The NFA $\hat \A$ 
can be complemented with respect to its language over $\hat\Sigma$ to an NFA 
$\overline{\hat{\A}}$. 
Then for every assignment $v:X\rightarrow S$, it holds that $\hat \A$ accepts 
$\zip(v)$ iff $\overline{\hat{\A}}$ does not accept $\zip(v)$.
Let $\overline{\alpha}$ be the quantification condition obtained from $\alpha$ 
by 
replacing every $\exists$ with $\forall$ and vice versa. 
We can prove by induction on $\alpha$ that $\overline{\A}$, the NFH 
whose 
underlying NFA is $\overline{\hat{\A}}$, and whose quantification 
condition is 
$\overline{\alpha}$, accepts $\overline {\hlang{\A}}$.
The size of $\overline{\A}$ is exponential in $|Q|$, due to the complementation construction for $\hat\A$.

Now, let $\A_1 = 
\tuple{\Sigma,X,Q,Q_0,\delta_1,F_1,\alpha_1}$ and $\A_2= 
\tuple{\Sigma,Y,P,P_0,\delta_2,F_2,\alpha_2}$ be two NFH with $|X|=k$ and 
$|Y|=k'$ variables, respectively.

\bigbreak
\noindent{\bf Union.}
We construct an NFH 
$\A_{\cup} = \tuple{\Sigma,X\cup Y, Q\cup P\cup\{p_1,p_2\},  Q_0\cup P_0, \delta, 
F_1\cup F_2\cup\{p_1,p_2\}, \alpha}$, where $\alpha = \alpha_1\alpha_2$ (that is, we concatenate the two quantification conditions), and where $\delta$ is defined as follows. 

\begin{itemize}
\item For every $$(q_1\xrightarrow {(\sigma_1,\ldots, \sigma_k)} q_2)\in 
\delta_1$$ we set $$(q_1\xrightarrow{(\sigma_1,\ldots, \sigma_k)+t} q_2)\in 
\delta$$ for every $t\in (\Sigma\cup\{\#\})^{k'}$. 

\item
For every $$(q_1\xrightarrow {(\sigma_1,\ldots, \sigma_{k'})} q_2)\in 
\delta_2$$ we set $$(q_1\xrightarrow{t+(\sigma_1,\ldots, \sigma_{k'})} q_2)\in 
\delta$$ for every $t\in (\Sigma\cup\{\#\})^{k}$. 

\item
For every $q\in F_1$, we set 
$$(q\xrightarrow{(\#)^k+t}p_1)\in \delta$$
and
$$(p_1\xrightarrow{(\#)^k+t}p_1)\in \delta$$
for every $t\in (\Sigma\cup\{\#\})^{k'}$.

\item
For every $q\in F_2$, we set 
$$(q\xrightarrow{t+(\#)^{k'}}p_2)\in \delta$$
and
$$(p_2\xrightarrow{t+(\#)^{k'}}p_2)\in \delta$$
for every $t\in (\Sigma\cup\{\#\})^{k}$.
\end{itemize}

Let $S$ be a hyperword. 
For every $v:(X\cup Y)\rightarrow S$, it holds that if $\zip(v|_X)\in\lang{\hat\A_1}$, then $\zip(v)\in\lang{\hat\A_{\cup}}$. Indeed, according to our construction, every word assigned to the $Y$ variables is accepted in the $\A_1$ component of the construction, and so it satisfies both types of quantifiers.
A similar argument holds for $v|_Y$ and $\A_2$.

Also, according to our construction, for every $v:(X\cup Y)\rightarrow S$,  if $\zip(v)\in\lang{\hat\A_{\cup}}$, then either $\zip(v|_X)\in\lang{\hat\A_1}$, or $\zip(v|_Y)\in\lang{\hat\A_2}$. 
As a conclusion, we have that $\hlang{\A_{\cup}} = \hlang{\A_1}\cup\hlang{\A_2}$. 

The state space of $\A_\cup$ is linear in the state spaces of $\A_1,\A_2$. 
However, the size of the alphabet of $\A_\cup$ may be exponentially larger than 
that of $\A_1$ and $\A_2$, since we augment each letter with all sequences of 
size $k'$ (in $\A_1$) and $k$ (in $\A_2$).

\stam{
The proof is similar to the standard constructive proof for NFAs. Let $\A = 
\tuple{\Sigma,X,Q,Q_0,\delta,F,\alpha}$ with underlying NFA and $\A'= 
\tuple{\Sigma',X',Q',Q'_0,\delta',F',\alpha'}$ be two NHFs. We now construct an 
NHF $\A'' = \tuple{\Sigma'',X'',Q'',Q''_0,\delta'',F'',\alpha''}$ that accepts 
$\hlang{\A} \cup \hlang{\A'}$ as follows. Let
\begin{itemize}
 \item  $\Sigma'' = \Sigma \cup \Sigma' \cup \{\#\}$,
 \item $X'' = X \cup X'$,
 \item $Q'' = Q \cup Q' \cup \{q_{\mathit{init}}\}$, \todo{Sarai: we allow a set 
of inital state, so taking the union is enough, no need for a new state}
 \item $Q''_0 = q_{\mathit{init}}$,
 \item $F'' = F \cup F'$,
 \item $\alpha'' = \alpha.\alpha'$
 
\end{itemize}
Transitions of $\A''$ include the following:

\begin{itemize}
 \item  We add transitions $q\xrightarrow{\epsilon} q'$, for each $q' \in 
Q''_0$, where $\epsilon$ is the null letter. 

\item For each transition $q \xrightarrow{(\sigma_{i_1}, \dots, \sigma_{i_k})} 
q' \in \delta$ in the underlying NFA of $\A$, we include a transition $q 
\xrightarrow{(\sigma_{i_1}, \dots, \sigma_{i_k}, \#, \dots, \#)} q'$ in 
$\delta''$, where $\#$ is repeated $|\alpha'|$ number of times (i.e., the 
number 
of quantifiers in $\alpha$). Likewise, for each transition $q 
\xrightarrow{(\sigma_{i_1}, \dots, \sigma_{i_{k'}})} q' \in \delta'$ (in the 
underlying NFA of $\A'$), we include a transition $q \xrightarrow{(\#, \dots, 
\#, \sigma_{i_1}, \dots, \sigma_{i_{k'}})} q'$ in $\delta''$, where $\#$ is 
repeated $|\alpha|$ number of times (the number of quantifiers in $\alpha$).
\todo{Sarai: I think this is incorrect. We should not add $\#$, but every 
possible combination of $\Sigma\cup\{\#\}$ letters}
\end{itemize}

Now, it is straightforward to see that every hyperword in $\hlang{\A}$ or 
$\hlang{\A'}$ is accepted by $\A''$. This is because the quantifiers not 
relevant to hyperwords (e.g., $\alpha$) for a hyperword in $\hlang{\A'}$ are 
taken care of by the padded words in $(\#, \dots, \#, \sigma_{i_1}, \dots, 
\sigma_{i_{k'}})$. The same argument holds for quantifiers in $\alpha'$ and 
hyperwords in $\hlang{\A}$.
}% of stam

\bigbreak
\noindent{\bf Intersection.}
The proof follows the closure under union and complementation. However, we
also offer a direct translation, which avoids the need to complement. 
We construct an NFH 
$\A_\cap = \tuple{\Sigma,X\cup Y, (Q\cup \{q\} \times P\cup \{p\}), (Q_0\times P_0), \delta, 
(F_1\cup\{q\}) \times (F_2\cup\{p\}), \alpha_1\alpha_2}$, where $\delta$ is defined as follows. 
\begin{itemize}
    \item For every $(q_1\xrightarrow {(\sigma_1,\ldots, \sigma_k)} q_2)\in 
\delta_1$ and every $(p_1\xrightarrow {(\sigma'_1,\ldots, \sigma'_{k'}) 
}p_2)\in\delta_2$, we have 
$$\Big((q_1,p_1)\xrightarrow{(\sigma_1\ldots, \sigma_k,\sigma'_1,\ldots 
\sigma'_{k'})} (q_2,p_2)\Big)\in\delta$$
\item For every $q_1\in F_1, (p_1 \xrightarrow{(\sigma'_1,\ldots, 
\sigma'_{k'})} p_2)\in \delta_2$ we have
$$\Big((q_1,p_1)\xrightarrow {(\#)^k +(\sigma'_1,\ldots 
\sigma'_{k'})}(q,p_2)\Big), \Big((q,p_1)\xrightarrow {(\#)^k+(\sigma'_1,\ldots, 
\sigma'_{k'})}(q,p_2)\Big) \in \delta$$
\item For every $(q_1\xrightarrow {(\sigma_1,\ldots, \sigma_k)} q_2)\in 
\delta_1$ and $p_1\in F_2$, we have
$$\Big((q_1,p_1)\xrightarrow 
{(\sigma_1,\ldots, \sigma_{k})+(\#)^{k'}}(q_2,p)\Big), \Big((q_1,p)\xrightarrow 
{(\sigma_1,\ldots, \sigma'_{k})+(\#)^{k'}}(q_2,p)\Big)\in \delta$$
\end{itemize}
Intuitively, the role of $q,p$ is to keep reading $(\#)^k$ and $(\#)^{k'}$ after 
the word read by $\hat\A_1$ or $\hat\A_2$, respectively, has ended. 

The NFH $\hat{\A_\cap}$ simultaneously reads two words $\zip(w_1, w_2, \ldots, 
w_k)$ and $\zip(w'_1,w'_2,\ldots w'_{k'})$ that are read along $\hat\A_1$ and 
$\hat\A_2$, 
respectively, and accepts iff both words are accepted. The correctness 
follows from the fact that for $v:(X\cup Y)\rightarrow S$, we have that 
$\zip(v)$ is accepted by $\hat\A$ iff $\zip (v|_X)$ and $\zip(v|_Y)$ are 
accepted by $\hat\A_1$ and $\hat\A_2$, respectively. 

This construction is polynomial in the sizes of $\A_1$ and $\A_2$.
\end{proof}

\subsection*{Theorem~\ref{thm:nfh.nonemptiness}}

\begin{proof}
We mimic the proof idea in~\cite{fh16}, which uses a reduction from the {\em 
Post correspondence problem (PCP)}.
A PCP instance is a collection $C$ of dominoes of the form:
$$ \Bigg\{\Big[\frac{u_1}{v_1} \Big], 
\Big[\frac{u_2}{v_2} \Big],\dots, \Big[\frac{u_k}{v_k} \Big] \Bigg\}$$
where for all $i \in [1, k]$, we have $v_i,u_i \in 
\{a,b\}^*$.
%\alphabet_{\mathit{pcp}}^*$.
%
The problem is to decide whether there exists a finite sequence of the dominoes 
of the form
$$\Big[\frac{u_{i_1}}{v_{i_1}} \Big]\Big[\frac{u_{i_2}}{v_{i_2}} \Big] \cdots 
\Big[\frac{u_{i_m}}{v_{i_m}} \Big]$$
where each index $i_j \in [1, k]$, such that the upper and lower finite 
strings of the dominoes are equal, i.e.,
$$ u_{i_1}u_{i_2}\cdots{}u_{i_m} = v_{i_1}v_{i_2}\cdots{}v_{i_m}$$
For example, if the set of dominoes is
$$ C_{\mathsf{exmp}} = \Bigg\{ \Big[\frac{ab}{b}\Big], 
\Big[\frac{ba}{a}\Big],\Big[\frac{a}{aba}\Big] \Bigg\} $$
Then, a possible solution is the following sequence of dominoes from 
$C_{\mathsf{exmp}}$: 
$$\mathsf{sol} = \Big[\frac{a}{aba}\Big]\Big[\frac{ba}{a}\Big] 
\Big[\frac{ab}{b}\Big ].  $$

\stam{
For some collection of dominoes, it is impossible to find a match. 
An example is 
the following:
$$\Big[\frac{w}{v} \Big] = \bigg\{\Big[\frac{abc}{ab} \Big], \Big[\frac{ca}{a} 
\Big], \Big[\frac{acc}{ba} \Big]\bigg\}$$

First, we map an arbitrary instance of the PCP 
$[\frac{w}{v}]=\{[\frac{w_2}{v_2}], [\frac{w_2}{v_2}], \dots, 
[\frac{w_k}{v_k}]\}$ to an NHF.
}

Given an instance $C$ of PCP, we encode a solution as a word $w_{sol}$ over the 
following alphabet:
$$\alphabet = \Big\{\frac{\sigma}{\sigma'} \mid \sigma,\sigma'\in\{a,b,{\dot 
a},{\dot b}, \$\}\Big\}.$$
Intuitively, $\dot{\sigma}$ marks the beginning of a new domino, and 
$\$$ marks the end of a sequence of the upper or lower parts of the dominoes 
sequence.

%For a word $w = \frac{\sigma_1}{\sigma'_1}   \frac{\sigma_2}{\sigma'_2}  
% \cdots  \frac{\sigma_n}{\sigma'_n} \in \Sigma^* $, we call $\sigma_1\cdots 
% \sigma_n$ the {\em upper part} of $w$, and the word $\sigma'_1\cdots 
%\sigma'_n$ the {\em lower part} of $w$. 

We note that $w_{sol}$ encodes a legal solution iff the following conditions are 
met:

\begin{enumerate}
    \item For every $ \frac{\sigma}{\sigma'}  $ that occurs in $w_{sol}$, it 
holds that $\sigma,\sigma'$ represent the same domino letter (both $a$ or both 
$b$, either dotted or undotted).
    \item The number of dotted letters in the upper part of $w_{sol}$ is equal 
to the number of dotted letters in the lower part of $w_{sol}$.
    \item $w_{sol}$ starts with two dotted letters, and the word $u_i$ between 
the $i$'th and $i+1$'th dotted letters in the upper part of $w_{sol}$, and the 
word $v_i$ between the corresponding dotted letters in the lower part of 
$w_{sol}$ are such that $ [\frac{u_i}{v_i}]  \in C$, for every $i$.
\end{enumerate}

We call a word that represents the removal of the first $k$ dominoes from 
$w_{sol}$ 
a {\em partial solution}, denoted by $w_{sol,k}$.
Note that the upper and lower parts of $w_{sol,k}$ are not necessarily of equal 
lengths (in terms of $a$ and $b$ sequences), since the upper and lower parts of 
a domino may be of different lengths, and so we use letter $\$$ to pad 
the end of the encoding in the shorter of the two parts. 

We construct an NFH $\A$, which, intuitively, expresses the following ideas: 
$(1)$ There exists an encoding $w_{sol}$ of a solution to $C$, and $(2)$ For 
every $w_{sol,k}\neq \epsilon$ in a hyperword $S$ accepted by $\A$, the word 
$w_{sol,k+1}$ is also in $S$. 

$\hlang{\A}$ is then the set of all hyperwords that contain an encoded solution 
$w_{sol}$, as well as all its suffixes obtained by removing a prefix of dominoes 
from $w_{sol}$. This ensures that $w_{sol}$ indeed encodes a legal solution. For example, a matching hyperword $S$ (for solution 
$\mathsf{sol}$ discussed earlier) that is accepted by $\A$ is:
$$ S = \{  w_{sol} = \frac{\dot a}{\dot a}   \frac{\dot b}{b}   \frac{a}{a}   
\frac{\dot a}{\dot a}   \frac{b}{\dot b} ,
w_{sol,1} = \frac{\dot b}{\dot a}\frac{a}{\dot b}\frac{\dot a}{\$}\frac{b}{\$}, 
w_{sol,2} = \frac{\dot a}{\dot b}\frac{b}{\$}, w_{sol,3} = \epsilon
\} $$

Thus, the acceptance condition of $\A$ is $\alpha = \forall x_1\exists x_2 
\exists x_3$, where $x_1$ is to be assigned a potential partial solution 
$w_{sol,k}$, and $x_2$ is to be assigned $w_{sol,k+1}$, and $x_3$ is to be 
assigned $w_{sol}$.  

During a run on a hyperword $S$ and an assignment $v:\{x_1,x_2,x_3\}\rightarrow 
S$, the NFH $\A$ checks that the upper and lower letters of $w_{sol}$ all match.
In addition, $\A$ checks that the first domino of $v(x_1)$ is indeed in $C$, and 
that $v(x_2)$ is obtained from $v(x_1)$ by removing the first tile. 
$\A$ performs the latter task by checking that the upper and lower parts of 
$v(x_2)$ are the upper and lower parts of $v(x_1)$ that have been ``shifted'' 
back appropriately. That is, if the first tile in $v(x_2)$ is the encoding of 
$[\frac{w_i}{v_i}]$, then $\A$ uses states to remember, at each point, the last 
$|w_i|$ letters of the upper part of $v(x_2)$ and the last $|v_i|$ letters of 
the lower part of $v(x_2)$, and verifies, at each point, that the next letter in 
$v(x_1)$ matches the matching letter remembered by the state.
\end{proof}

\subsection*{Theorem~\ref{thm:nfhe.nfhf.nonemptiness}}

\begin{proof}
The lower bound for both fragments follows from the \comp{NL-hardness} of the nonemptiness problem for NFA. 

We turn to the upper bound, and begin with $\nfhe$. Let $\A_\exists$ be an $\nfhe$. 
We claim that $\A_\exists$ is nonempty iff $\hat\A_\exists$ accepts some legal 
word ${\bi w}$. The first direction is trivial. For the second direction, let 
${\bi w}\in\lang{\hat\A_\exists}$, and let $S=\unzip({\bi w})$. By 
assigning $v(x_i) = {\bi w}[i]$ for every $x_i\in X$, we get $\zip(v) = {\bi w}$, and 
according to the semantics of $\exists$, we have that $\A_\exists$ accepts $S$. 
To check whether $\hat\A_\exists$ accepts a legal word, we can run a 
reachability check on-the-fly, while advancing from a letter $\sigma$ to the 
next letter $\sigma'$ only if $\sigma'$ contains $\#$ in all the positions in 
which $\sigma$ contains $\#$. 
While each transition $T = q\xrightarrow{(\sigma_1,\ldots \sigma_n)} p$ in 
$\hat\A$ is of size $k$, we can encode $T$ as a set of size $k$ of encodings of 
transitions of type $q \xrightarrow {\sigma_i} p$ with a binary encoding of 
$p,q,\sigma_i$, as well as $i,t$, where $t$ marks the index of $T$ within the 
set of transitions of $\hat\A$. 
Therefore, the reachability test can be performed within space that is logarithmic in the size of $\A_\exists$.

Now, let $\A_\forall$ be an $\nfhf$ over $X$. We claim that $\A_\forall$ is 
nonempty iff $\A_\forall$ accepts a hyperword of size $1$. 
For the first direction, let $S\in\hlang{\A_\forall}$. Then, by the semantics of 
$\forall$, we have that for every assignment $v:X\rightarrow S$, it holds that 
$\zip(v)\in\lang{\hat{\A_\forall}}$. Let $u\in S$, and let $v_u(x_i) = u$ for every 
$x_i\in X$. Then, in particular, $\zip(v_u)\in\lang{\hat{\A_\forall}}$. Then for every assignment $v:X\rightarrow \{u\}$ (which consists of the 
single assignment $v_u$), it holds that $\hat{\A_\forall}$ accepts $\zip(v)$, 
and therefore $\A_\forall$ accepts $\{u\}$. 
The second direction is trivial. 

To check whether $\A_\forall$ accepts a hyperword of size $1$, we restrict the reachability test on $\hat\A_\forall$ to  
$k$-tuples of the form $(\sigma,\sigma,\ldots \sigma)$ for $\sigma\in \Sigma$. 
\end{proof}

\subsection*{Theorem~\ref{thm:nfhef.nonemptiness}}

\begin{proof}
We begin with the upper bound.
Let $S\in\hlang{\A}$. Then, according to the semantics of the quantifiers, 
there exist $w_1,\ldots w_m \in S$, such that for every assignment 
$v:X\rightarrow S$ in which $v(x_i) = w_i$ for every $1\leq i\leq 
m$, it holds that $\hat\A$ accepts $\zip(v)$. Let $v:X\rightarrow S$ be such an 
assignment. Then, $\hat\A$ accepts 
$\zip(v_\zeta)$ for every sequence $\zeta$ of the form $(1,2,\ldots 
m,i_1,i_2,\ldots i_{k-m})$. 
In particular, it holds for such sequences in which 
$1\leq i_j\leq m$ for every $1\leq j \leq k-m$,
that is, sequences in which the last $k-m$ variables are assigned words that are assigned to the first $m$ variables. Therefore, again 
by the semantics of the quantifiers, we have that $\{v(x_1),\ldots v(x_m)\}$ is 
in $\hlang{\A}$. The second direction is trivial.

We call $\zip(v_\zeta)$ as described above a {\em witness to the nonemptiness of 
$\A$}, i.e., $\zip(v_\zeta)$ is an instantiation of the existential quantifiers.
We construct an NFA $A$ based on $\hat\A$ that is nonempty iff $\hat\A$ accepts 
a witness to the nonemptiness of $\A$.
Let $\Gamma$ be the set of all sequences of the above form.
For every sequence $\zeta = (i_1,i_2,\ldots i_k)$ in $\Gamma$, we construct 
an NFA $A_\zeta = \tuple{ \hat\Sigma,Q,Q_0,\delta_\zeta,F}$, where for every 
$q\xrightarrow{(\sigma_{i_1},\sigma_{i_2},\ldots \sigma_{i_k})} q'$ in 
$\delta$, 
we have $q\xrightarrow{(\sigma_1,\sigma_2,\ldots \sigma_k)} q'$ in 
$\delta_\zeta$.
Intuitively, $A_{\zeta}$ runs on every word ${\bi w}$ the same way that $\hat\A$ runs 
on ${\bi w}_\zeta$. Therefore, $\hat\A$ accepts a witness ${\bi w}$ to the nonemptiness of 
$\A$ iff ${\bi w}\in\lang{A_\zeta}$ for every $\zeta\in\Gamma$. 

We define $A = \bigcap_{\zeta\in\Gamma} A_\zeta$.
Then $\hat\A$ accepts a witness to the nonemptiness of $\A$ iff $A$ is 
nonempty. 
Since $|\Gamma| = m^{k-m}$, the state space of $A$ is of size $O(n^{m^{k-m}})$, where $n=|Q|$, 
and its alphabet is of size $|\hat\Sigma|$. 
Notice that for $\A$ to be nonempty, $\delta$ must be of size at least 
$|(\Sigma\cup {\#})|^{(k-m)}$, to account for all the permutations of letters in 
the words assigned to the variables under $\forall$ quantifiers (otherwise, we 
can immediately return ``empty''). Therefore, $|\hat\A|$ is $O(n\cdot 
|\Sigma|^k)$. 
We then have that the size of $A$ is $O(|\hat \A|^k)$.  
If the number $k-m$ of $\forall$ quantifiers is fixed, then $m^{k-m}$ is 
polynomial in $k$. However, now $|\hat\A|$ may be polynomial in $n,k$, and $|\Sigma|$, 
and so in this case as well, the size of $A$ is $O(|\hat A|^k)$. 

Since the nonemptiness problem for NFA is \comp{NL-complete}, the problem for $\nfhef$ 
can be decided in space of size that is polynomial in $|{\hat\A}|$. 

\bigbreak
\noindent{\bf \comp{PSPACE hardness}}
For the lower bound, we show a reduction from a polynomial version of the {\em 
corridor tiling problem}, %~\cite{Boas97}, 
defined as follows.
We are given a finite set $T$ of tiles, two relations $V \subseteq T \times T$ 
and $H \subseteq T \times T$,
an initial tile $t_0$, a final tile $t_f$, and a bound $n>0$.
We have to decide whether there is some $m>0$ and a tiling of a $n \times 
m$-grid such that
(1) The tile $t_0$ is in the bottom left corner and the tile $t_f$ is in the top 
right corner,
(2) A horizontal condition: every pair of horizontal neighbors is in $H$, and
(3) A vertical condition: every pair of vertical neighbors is in $V$.
When $n$ is given in unary notation, the problem is known to be 
\comp{PSPACE-complete}.

Given an instance $C$ of the tiling problem, we construct an $\nfhef$ $\A$ that 
is nonempty iff $C$ has a solution. 
We encode a solution to $C$ as a word $w_{sol} =w_1\cdot w_2\cdot w_m\$$ over 
$\Sigma = T\cup\{1,2,\ldots n,\$\}$, where the word $w_i$, of the form $1\cdot t_{1,i}\cdot 2 \cdot t_{2,i},\ldots n\cdot 
t_{n,i}$, describes the contents of row $i$. 

To check that $w_{sol}$ indeed encodes a solution, we need to make sure that:
\begin{enumerate}
\item $w_1$ begins with $t_0$ and $w_m$ ends with $t_f\$$.
\item $w_i$ is of the correct form.
\item Within every $w_i$, it holds that $(t_{j,i},t_{j+1,i})\in H$.
    
    \item For $w_i,w_{i+1}$, it holds that $(t_{j,i}, t_{j,i+1})\in V$ for every 
$1\leq j\leq n$.
\end{enumerate} 

Verifying items $1-3$ is easy via an NFA of size $O(n|H|)$.  
The main obstacle is item $4$. 

We describe an $\nfhef$ $\A = \tuple{T\cup \{0,1,2,\ldots n,\$\}, 
\{y_1,y_2,y_3,x_1,\ldots x_{\log(n)}\}, Q, \{q_0\},\delta, F, \alpha}$ that is 
nonempty iff there exists a word that satisfies items $1-4$.
The quantification condition $\alpha$ is $\exists y_1\exists y_2 \exists y_3\forall 
x_1 \ldots \forall x_{\log(n)}$.
The NFH $\A$ only proceeds on letters whose first three positions are of the type 
$(r,0,1)$, where $r\in T\cup\{1,\ldots n,\$\}$. Notice that this means that $\A$
requires the existence of the words $0^{|w_{sol}|}$ and $1^{|w_{sol}|}$ (the $0$ 
word and $1$ word, henceforth).
$\A$ makes sure that the word assigned to $y_1$ matches a correct solution 
w.r.t. items $1-3$ described above.  
We proceed to describe how to handle the requirement for $V$. 
We need to make sure that for every position $j$ in a row, the tile in position 
$j$ in the next row matches the current one w.r.t. $V$. We can use a state $q_j$ 
to remember the tile in position $j$, and compare it to the tile in the next 
occurrence of $j$. The problem is avoiding having to check all positions 
simultaneously, which would require exponentially many states. To this end, we 
use $\log(n)$ copies of the $0$ and $1$ words to form a binary encoding of the 
position $j$ that is to be remembered. The $\log(n)$ $\forall$ conditions make 
sure that every position within $1-n$ is checked.  

We limit the checks to words in which $x_1,\ldots x_{\log(n)}$ are the $0$ or 
$1$ words, by having $\hat\A$ accept every word in which there is a letter that 
is not over $0,1$ in positions $4,\ldots \log(n)+3$. This takes care 
of accepting all cases in which the word assigned to $y_1$ is also assigned to 
one of the $x$ variables. 

To check that $x_1,\ldots x_{\log(n)}$ are the $0$ or $1$ words, $\hat\A$ checks 
that the values in positions $4$ to $\log(n)+3$ remain constant throughout the 
run. 
In these cases, upon reading the first letter, $\hat\A$ remembers the value $j$ 
that is encoded by the constant assignments to $x_1,\ldots x_{\log(n)}$ in a 
state, and makes sure that throughout the run, the tile that occurs in the 
assignment $y_1$ in position $j$ in the current row matches the tile in position 
$j$ in the next row. 

We construct a similar reduction for the case that the number of $\forall$ 
quantifiers is fixed: instead of encoding the position by $\log(n)$ bits, we can 
directly specify the position by a word of the form $j^*$, for every $1\leq 
j\leq n$. 
Accordingly, we construct an $\nfhef$ over $\{x, y_1,\ldots y_{n},z\}$, with a 
quantification condition $\alpha = \exists x\exists y_1 \ldots \exists 
y_{n}\forall z$. The NFA $\hat\A$ advances only on letters whose assignments to 
$y_1,\ldots y_n$ are always $1,2,\ldots n$, respectively, and checks only words 
assigned to $z$ that are some constant $1\leq j\leq n$. Notice that the fixed 
assignments to the $y$ variables leads to $\delta$ of polynomial size.  
In a hyperword accepted by $\A$, the word assigned to $x$ is $w_{sol}$, and the 
word assigned to $z$ specifies which index should be checked for conforming to 
$V$.
\end{proof}

\subsection*{Theorem~\ref{thm:nfh.membership.finite}}

\begin{proof}
We can decide the membership of $S$ in $\hlang{\A}$ by iterating over all relevant assignments from $X$ to $S$, and for every such assignment $v$, checking on-the-fly whether $\zip(v)$ is accepted by $\hat\A$. 
This algorithm uses space of size that is polynomial in $k$ and logarithmic in $|\A|$ and in $|S|$. 

In the case that $k' = O(\log k)$, an \comp{NP} upper bound is met by iterating over 
all assignments to the variables under $\forall$, while guessing assignments to 
the variables under $\exists$. For each such assignment $v$, checking whether $\zip(v)\in\lang{\hat\A}$ can be done on-the-fly. 

We show \comp{NP-hardness} for this case by a reduction from the Hamiltonian cycle problem. 
Given a graph $G = \tuple{V,E}$ where $V = \{v_1,v_2,\ldots, v_n\}$ and 
$|E|=m$, 
we construct an $\nfhe$ $\A$ over $\{0,1\}$ with $n$ states, $n$ variables, 
$\delta$ of size $m$, and a hyperword $S$ of size $n$, as follows. $S = 
\{w_1,\ldots, w_n\}$, where $w_i$ is the word over $\{0,1\}$ in which all 
letters 
are $0$ except for the $i$'th. 
The structure of $\hat\A$ is identical to that of $G$, and we set $Q_0 = F = 
\{v_1\}$. For the transition relation, for every $(v_i,v_j)\in E$, we have 
$(v_i, \sigma_i,v_j)\in \delta$, where $\sigma_i$ is the letter over $\{0,1\}^n$ 
in which all positions are $0$ except for position $i$. 
Intuitively, the $i$'th letter in an accepting run of $\hat\A$ marks traversing 
$v_i$. Assigning $w_j$ to $x_i$ means that the $j$'th step of the run 
traverses $v_i$. Since the words in $w$ make sure that every $v\in V$ is 
traversed exactly once, and that the run on them is of length $n$, we have that 
$\A$ accepts $S$ iff there exists some permutation $p$ of the words in $S$ such 
that $p$ matches a Hamiltonian cycle in $G$.

\noindent{\it remark}
To account for all the assignments to the $\forall$ variables, $\delta$ -- and therefore, $\hat\A$ -- must be of size at least $2^{k'}$ (otherwise, we can return ``no'').
We then have that if $k = O(k')$, then space of size $k$ is logarithmic in $|\hat\A|$, and so the problem in this case can be solved within logarithmic space.
A matching NL lower bound follows from the membership problem for NFA. 
\end{proof}

\subsection*{Theorem~\ref{thrm:membershipFULL}}

\begin{proof}
Let $A = \tuple{\Sigma, P, P_0, \rho,F}$ be an NFA, and let $\A = \tuple{\Sigma, 
\{x_1,\ldots, x_k\}, Q, Q_0, \delta, {\cal F},\alpha}$ be an NFH. 

First, we construct an NFA $A'=\tuple{\Sigma\cup\{\#\}, P', P'_0, \rho', F'}$
by extending the alphabet of $A$ to $\Sigma\cup\{\#\}$, adding a new and 
accepting state $p_f$ to $P$ with a self-loop labeled by $\#$, and transitions 
labeled by $\#$ from every $q\in F$ to $p_f$. 
The language of $A'$ is then $\lang{A}\cdot \#^*$.  
We describe a recursive procedure (iterating over $\alpha$) for deciding 
whether $\lang{A}\in\hlang{\A}$.

For the case that $k=1$, it is easy to see that if $\alpha = \exists x_1$, then 
$\lang{A}\in\hlang{\A}$ iff $\lang{A}\cap \lang{\hat{\A}} \neq \emptyset$.
Otherwise, if $\alpha = \forall x_1$, then $\lang{A}\in\hlang{\A}$ iff 
$\lang{A}\notin \hlang{\overline{\A}}$, where $\overline{\A}$ is the NFH for 
$\overline{\hlang{\A}}$ described in Theorem~\ref{thm:nfh.operations}. 
Notice that the quantification condition for $\overline{\A}$
is $\exists x_1$, and so this conforms to the base case.

For $k>1$, we construct a sequence of NFH $\A_1, \A_2, \ldots, \A_k$. 
If $\quant_1 = \exists$ then we set $\A_1 = \A$, and otherwise we set $\A_1 = \overline{\A}$.
Let $\A_i = \tuple{\Sigma, \{x_i,\ldots, x_k\}, Q_i, 
Q^0_i, \delta_i, {\cal F}_i,\alpha_i}$.
If $\alpha_i$ starts with $\exists$, then 
we construct $\A_{i+1}$ as follows. 

The set of variables of $\A_{i+1}$ is $\{x_{i+1},\ldots, x_k\}$, and the 
quantification condition $\alpha_{i+1}$ is \linebreak $\quant_{i+1}x_{i+1}\cdots 
\quant_kx_k$, for $\alpha_i = \quant_ix_i \quant_{i+1}\cdots \quant_kx_k$. 
The set of states of $\A_{i+1}$ is $ Q_i\times P'$, and the set of initial 
states is $Q_i^0\times P_0$. The set of accepting states is ${\cal F}_i\times 
F'$.
For every 
$(q\xrightarrow{(\sigma_i,\ldots,\sigma_k)}q')\in\delta_i$ and every 
$(p\xrightarrow{\sigma_i}p')\in \rho$, we have 
$((q,p)\xrightarrow{(\sigma_{i+1},\ldots, \sigma_k)}(q',p'))\in\delta_{i+1}$. 
Then, $\hat\A_{i+1}$ accepts a word $\zip(u_1,u_2,\ldots, u_{k-i})$ iff there 
exists a word $u\in \lang{A}$, such that $\hat\A_{i}$ accepts 
$\zip(u,u_1,u_2,\ldots, u_{k-i})$. 

 Let $v:\{x_{i},\ldots, x_k\}\rightarrow \lang{A}$.
Then $\lang{A}\models _v (\alpha_i,\A_i)$ iff there exists $w\in \lang{A}$ such 
that $\lang{A}\models_{v[x_i\rightarrow w]} (\alpha_{i+1},\A_i)$.
For an assignment $v':\{x_{i+1},\ldots, x_k\}\rightarrow \lang{A}$, it holds 
that 
$\zip(v')$ is accepted by $\hat{\A}_{i+1}$ iff there exists a word $w\in 
\lang{A}$ such that $\zip(v)\in\lang{\hat{\A}_i}$, where $v$ is obtained from 
$v'$ 
by setting $v(x_i) = w$. 
Therefore, we have that $\lang{A}\models_{v[x_i\rightarrow 
w]}(\alpha_{i},\A_i)$ 
iff $\lang{A}\models_{v'} (\alpha_{i+1}, \A_{i+1})$, that is, 
$\lang{A}\in\hlang{\A_i}$ iff $\lang{A}\in\hlang{\A_{i+1}}$.

If $\alpha_i$ starts with $\forall$, then we have that $\lang{A}\in \hlang{\A_i}$ iff 
$\lang{A}\notin \overline{\hlang{\A_i}}$. We construct $\overline{\A_{i}}$ for 
$\overline{\hlang{\A_i}}$ as described in Theorem~\ref{thm:nfh.operations}. The 
quantification condition of $\overline{\A_{i}}$ then begins with $\exists x_i$, and we 
apply the previous case, and construct $\A_{i+1}$ w.r.t. $\overline{\A_{i}}$, to check 
for non-membership.

Every $\forall$ quantifier requires complementation, which is exponential in 
$n$, the number of states in $\A$. Therefore, in the worst case, the complexity of this algorithm is 
$O(2^{2^{...^{|Q||A|}}})$, where the tower is of height $k$. If the number of 
$\forall$ quantifiers is fixed, then the complexity is $O(|Q||A|^k)$. 
\end{proof}

\subsection*{Theorem~\ref{thrm:containment}}

\begin{proof}
For the lower bound, we show a reduction from the containment problem for NFA, which is known to be \comp{PSPACE-hard}.
Let $A_1,A_2$ be NFA. We ``convert'' them to NFH $\A_1,\A_2$ by adding to both a single variable $x$, and a quantification condition $\forall x$.
By the semantics of the $\forall$ quantifier, we have that $\hlang{\A_1} = \{S | S\subseteq \lang{A_1}\}$, and similarly for $\A_2$. 
Therefore, we have that $\hlang{\A_1}\subseteq \hlang{\A_2}$ iff $\lang{A_1}\subseteq \lang{A_2}$.

For the upper bound, first notice that complementing an $\nfhf$ yields an 
$\nfhe$, and vice versa. 
Consider two NFH $\A_1$ and $\A_2$.
%, where $\A_1,\A_2\in\nfhf,\nfhe$. 
Then $\hlang{\A_1}\subseteq\hlang{\A_2}$ iff 
$\hlang{\A_1}\cap\overline{\hlang{\A_2}}  =  \emptyset$. 
We can use the constructions in the proof of Theorem~\ref{thm:nfh.operations} 
to compute a matching NFH $\A = 
\A_1\cap\overline{\A_2}$, and check its nonemptiness.
The complementation construction is exponential in $n_2$, the number of states 
of $\A_2$, and the intersection construction is polynomial in $|\A_1|, 
|\overline{\A_2}|$. 

If $\A_1\in\nfhe$ and $\A_2\in\nfhf$ or vice versa, then $\A$ is an $\nfhe$ or 
$\nfhf$, respectively, whose nonemptiness can be decided in space that is 
logarithmic in $|\A|$. 

Now, consider the case where $\A_1$ and $\A_2$ are both $\nfhe$ or both $\nfhf$.
It follows from the proof of Theorem~\ref{thm:nfh.operations}, that for two 
NFH $\A,\A'$, the quantification condition of $\A\cap\A'$ may be any 
interleaving of the quantification conditions of $\A$ and $\A'$. Therefore, if 
$\A_1,\A_2\in\nfhe$ or $\A_1,\A_2\in\nfhf$, we can construct $\A$ to be an 
$\nfhef$. 
This is also the case when $\A_1\in\nfhef$ and $\A_2\in\nfhe$ or $\A_2\in\nfhf$.

Either $\A_2$ or $\overline{\A_2}$ is an $\nfhf$, whose underlying NFA has a transition relation of size that is exponential in $k$ (otherwise the $\nfhf$ is empty). The same holds for $\A_1\in\nfhef$. The PSPACE upper bound of Theorem~\ref{thm:nfhef.nonemptiness} is derived from the number of variables and not from the state-space of the NFH. Therefore, while $|\bar{\A_2}|$ is exponential in the number of states of $\A_2$, checking the nonemptiness of $\A$ is in \comp{PSPACE}. 
\end{proof}

\subsection*{Lemma~\ref{lem:langeq}}

\begin{proof}
We begin with $\nfhf$.
For the first direction, since $\lang{\hat\A_\forall'}\subseteq \lang{\hat\A_\forall}$, we have 
$\hlang{\A_\forall'}\subseteq \hlang{\A_\forall}$.
For the second direction, let 
$S\in\hlang{A_\forall}$. Then for every $v:S\rightarrow X$, it holds that 
$\zip(v)\in\lang{\hat\A_\forall}$. Also,   $\zip(v')\in\lang{\hat\A_\forall}$ for every sequence 
$v'$ of $v$. Then $\zip(v)$ and all its sequences are in $\lang{\hat\A_\forall'}$. Since 
this holds for every $v:X\rightarrow S$, we have that $S\in\hlang{\A_\forall'}$.

We proceed to $\nfhe$.
For the first direction, since $\lang{\hat\A_\exists}\subseteq \lang{\hat\A_\exists'}$, we have 
$\hlang{\A}\subseteq \hlang{\A'}$.
For the second direction, let 
$S\in\hlang{A_\exists'}$. Then there exists $v:S\rightarrow X$ such that 
$\zip(v)\in\lang{\hat\A_\exists'}$. Then
$\zip(v)$ is a permutation of some word $\zip(v')\in\lang{\hat\A_\exists}$.  
According to the semantics of the $\exists$ quantifier, we have that $S\in \hlang{\A_\exists}$.
\end{proof}

\subsection*{Lemma~\ref{lem:permutation.sequence.complete}}

\begin{proof}
We begin with $\nfhf$.
To construct $\A_\forall'$ given $\A_\forall$, we use a similar construction to the one 
presented in the proof of Theorem~\ref{thm:nfhef.nonemptiness}. Essentially, for 
every sequence $\zeta$ of $(1,2,\ldots, k)$, we construct an NFA $A_\zeta$, in 
which every run on a word ${\bi w}$ matches a run of $\hat\A_\forall$ on ${\bi w}_\zeta$. 
The $\nfhf$ $\A'$ is then obtained from $\A_\forall$ by replacing the underlying NFA 
with $\bigcap_{\zeta\in\Gamma} A_\zeta$, where $\Gamma$ is the set of sequences 
of $(1,2,\ldots, k)$.

For $\nfhe$, similarly to the case of $\nfhf$, we construct $\A_\exists'$ given $\A_\exists$ by constructing 
$A_\zeta$ for every permutation $\zeta$ of $(1,2,\ldots, k)$.
In this case, the $\nfhe$ $\A_\exists'$ is obtained from $\A_\exists$ by replacing the 
underlying NFA with $\bigcup_{\zeta\in\Gamma} A_\zeta$, where $\Gamma$ is the 
set of permutations of $(1,2,\ldots, k)$.
\end{proof}

\subsection*{Theorem~\ref{thm:permutation.sequence.complete}}

\begin{proof}
We begin with $\nfhf$.
For the first direction, let ${\bi w}\in\lang{\hat\A_\forall}$.
Since $\A_1$ is sequence-complete, then ${\bi w}'\in\lang{\hat\A_1}$ for every 
sequence ${\bi w}'$ of ${\bi w}$. Then, by the semantics of the $\forall$ quantifier, we 
have that $\unzip({\bi w})\in\hlang{\A_1}$. Therefore, $\unzip({\bi w})\in\hlang{\A_2}$, and so 
${\bi w}$ (and all its sequences) are in $\lang{\hat\A_2}$. A similar argument can be 
made to show that for every ${\bi w}\in\lang{\hat{\A_2}}$, it holds that 
${\bi w}\in\lang{\hat\A_1}$. Therefore, $\lang{\hat\A_1} = \lang{\hat \A_2}$.
The second direction is trivial.

We continue to $\nfhe$.
For the first direction, let ${\bi w}\in\lang{\hat\A_1}$. Then 
$\unzip({\bi w})\in\lang{\A_1}$. 
Then, by the semantics of the $\exists$ quantifier, there exists some 
permutation ${\bi w}'$ of ${\bi w}$ such that ${\bi w}'\in\lang{\hat\A_2}$.
Since $\A_2$ is permutation-complete, we have that ${\bi w}\in\lang{\hat\A_2}$. A 
similar argument can be made to show that for every ${\bi w}\in\lang{\hat{\A_2}}$, it 
holds that ${\bi w}\in\lang{\hat\A_1}$. Therefore, $\lang{\hat\A_1} = \lang{\hat 
\A_2}$.
The second direction is trivial.
\end{proof}

\end{document}